\documentclass{amsart}
\usepackage{graphicx}
\usepackage{subfigure}

\newtheorem{Teorema}{Theorem}
\newtheorem{Corolario}{Corollary}
\newtheorem{Lema}{Lemma}
\newtheorem{definition}{Definition}

\graphicspath{{img/}{.}{fig/}}
\usepackage[foot]{amsaddr}

\usepackage[linesnumbered,boxed,ruled]{algorithm2e}
\SetKwInput{KwInit}{Initialization}

\begin{document}

\title[Understanding edge-connectivity in the Internet]{Understanding
edge-connectivity in the Internet through core-decomposition}
\author{J. Ignacio Alvarez-Hamelin \and Mariano G. Beir\'o \and Jorge R.  Busch}

\address{Facultad de Ingenier\'{\i}a, Universidad de Buenos Aires, Paseo
Col\'on 850, C1063ACV Buenos Aires -- Argentina}
\email{ignacio.alvarez-hamelin@cnet.fi.uba.ar,\{mbeiro,jbusch\}@fi.uba.ar}

\begin{abstract}
Internet is a complex network composed by several networks: the
Autonomous Systems, each one designed to transport information
efficiently.  Routing protocols aim to find paths between nodes whenever
it is possible (i.e., the network is not partitioned), or to find paths
verifying specific constraints (e.g., a certain QoS is
required). As connectivity is a measure related to both of them
(partitions and selected paths) this work provides a formal lower bound
to it based on core-decomposition, under certain conditions, and low
complexity algorithms to find it. We apply them to analyze maps obtained
from the prominent Internet mapping projects, using the LaNet-vi
open-source software for its visualization. 
\end{abstract}

\keywords{connectivity, core-decomposition, algorithm, visualization}
\maketitle

\section{Introduction}

Nowadays, Internet is a highly developed network connecting people
around the world. Its continuous growth raises new problems, challenging
us to find novel and creative solutions. Modeling the Internet is a
problem as relevant as difficult: A tight model of the Internet would 
provide us with a powerful tool to analyze
applications behavior or test routing algorithms during their design. 
In the beginnings of the Internet \cite{Waxman88} was
the first to introduce an Internet model to test his multicast protocol.
Then other authors introduced models trying to improve Waxman's one adding hierarchy
and other sophistications (e.g., see
\cite{zegura97quantitative}). But the
\cite{faloutsos99powerlaw} paper was the first to show a remarkable
characteristic of the Internet's topology: the degree distribution of the nodes
follows a heavy-tailed behavior, instead of a Poisson's one as was believed (i.e. this was the case of the forementioned models which are a variant of the Random Graph \cite{ErdosRenyi59}). The unexpected consequence of this is that we have nodes with a large variety of degrees.

From this discovery, new simple models
aroused which tried to reproduce heavy-tailed behavior, such as the \cite{sf99} model based on preferential attachment, and later 
the \cite{FaKoPa02} model proving that a multivariate
optimization problem may lead to heavy-tailed degree distributions.

Later on, other properties were highlighted as important ones, like the
average neighboring degree distribution by \cite{PSVV01}. This
work showed that the Internet topology at the Autonomous System level (AS) is
different from the Internet topology at the inter-router level (IR).
On this basis, we think that a better understanding and modeling 
of the Internet's nature may be achieved by discovering new significant properties.

In this paper we turn our attention to connectivity in graphs, i.e., the number of different independent paths between their nodes. 
It is quite relevant as it represents the robustness
of the network: how tolerant it is to link failures, 
or what QoS it may offer when needed (e.g, see the RFCs of MultiProtocol Label
Switching \cite{MPLS1,MPLS2}). 
Also with regard to QoS, forwarding packets to more
connected nodes helps to find paths that fulfill the needed QoS,
therefore routing protocols like BGP \cite{BGP} might take advantage of information about connectivity.
Here we take different subgraphs (i.e., portions of the topology) and analyze each one's connectivity.

In particular, we introduce the concept of core-connected graphs. 
These graphs have the nice property that the minimum shell-index is a lower bound for the connectivity between two nodes.
As graphs are not core-connected in general, we also
present algorithms to find core-connected induced subgraphs. 
These algorithms are based on a generalization of Plesn\'{\i}k's
theorem \cite{plesnik:cgoagd}.

Finally, it has been shown previously that the $k$-core decomposition is capable of identifying networks sources by means of the
visualization \cite{AHDBV2006,BAHB2008}, may be used to validate models \cite{SBDG06}, and discover exploration biases on
the Internet \cite{AHDBV08}. These facts support the  $k$-core analysis as
one of the relevant tools to describe the Internet topology maps.

The paper is organized as follows. Section~\ref{sec:math} is devoted to
formalize the relation between $k$-cores and $k$-connected subgraphs.
Then in Section~\ref{sec:appl} we present a tool that implements these ideas giving a visualization
of the network according to its connectivity, and also a data file containing
the connectivity bound for every pair of nodes. We show some
applications to Internet maps.  Finally, we conclude with some important
remarks and comments on future work.

\section{Mathematics}\label{sec:math}

Let us introduce some general graph notions and notation.
Let $G$ be a simple graph ({\em i.e.}
a graph with no loops, no multiple edges) with vertex set $V(G)$ and edge set $E(G)$
(we follow in notation the book \cite{west:itgt}).
Given $A,B\subset V(G)$, $[A,B]$ is the set of edges
of the form $ab$,
joining a vertex $a\in A$ to a vertex $b\in B$.
As we consider edges without orientation,
$[A,B]=[B,A]$. Abusing of notation,
for $v\in V(G), A\subset V(G)$,
we write $[v,A]$ instead of $[\{v\},A]$.
The {\em degree} of a vertex $v\in V(G)$
is $d_G(v)\doteq |[v,V(G)]|$.
We shall denote 
$n(G)\doteq |V|,e(G)\doteq |E|,\delta(G)\doteq \min_{v\in V} d_G(v),
\Delta(G)\doteq \max_{v\in V} d_G(v)$.The {\em neighborhood} of a vertex $v$, $N(v)$,
is the set of vertexes $w$ such that $vw \in E(G)$.
Given $A\subset V(G)$, $G(A)$
is the graph $G'$ such that $V(G')=A$ and $E(G')$
is the set of edges in $E(G)$ having both endpoints in $A$.
Given $v,w\in V(G)$, $\rho_G(v,w)$
is the distance in $G$ from $v$ to $w$,
that is the minimum length of a path from $v$ to $w$.
If $v\in V(G),A\subset V(G)$ we set
$\rho_G(v,A)\doteq \min_{w\in A} \rho_G(v,w)$.
We shall also use 
the notation 
\[\rho_A\doteq \max_{a,b\in A}\rho_{G(A)}(a,b)\]
for the diameter of $G(A)$. 

We introduce some notions of connectivity used along the paper.
An edge cut in $G$ is a set of edges $[S,\bar{S}]$,
where $S\subset V(G)$ and $\bar{S}\doteq V(G)\setminus S$ 
are non void.

The edge-connectivity of $G$, $k'(G)$,
is the minimum cardinal of the cuts in $G$.
We say that $G$ is $k$-edge-connected if $k'(G)\geq k$.

Menger's theorem has as a consequence that, 
if $G$ is $k$-edge-connected, 
given two vertices 
$v,w$ in $V(G)$
there are at least $k$-edge-disjoint paths
joining $v$ to $w$ 
(see \cite{west:itgt}, pp.153-169).

As we shall not deal in this work with 
vertex-disjoint paths,
in the sequel we shall speak of $k$-connectivity,
avoiding the reference to the edges.

\subsection{Cores decomposition}
Let, for $A\subset V$, 
\[\psi(A)=\min_{v\in A} d_{G(A)} (v)\]
Notice that for $A_1,A_2\subset V$,
\[
\psi(A_1\cup A_2)\geq \min(\psi(A_1),\psi(A_2))
\]
and as a consequence for any $k$
\[
C_k=\cup \{A:\psi(A)\geq k\}
\]
satisfies $\psi(C_k)\geq k$,
and of course $C_k$ contains any other set that 
satisfies this property.

The preceding remark justifies 
the following \cite{Batagelj02}
\begin{definition} 
\label{k-core} 
A subgraph $H = G(C_k)$ induced by the set $C_k\subseteq V$ is the {\em
$k$-core} (or the core of order $k$) in $G$ if 
$C_k$ is the maximal subset of $V$ such that 
$\min_{v\in C} d_{G(C)}(v)\ge k$.
\end{definition}

Thus, if we let 
$k_{max} \doteq \max \{k: C_k\not=\emptyset\}$,
we obtain the decomposition
\[
V=\cup \{C_k :0 \leq k \leq k_{\max}\}
\]  
where $C_{k+1}\subset C_k, 0\leq k \leq k_{\max}-1$.

A $k$-core of $G$ can be obtained by recursively removing all
the vertices of degree lower than $k$, with their incident edges,
until all the vertices in the
remaining graph have degree greater than or equal to $k$. This
decomposition can be easily implemented: 
the algorithm by \cite{Batagelj03} has a time complexity of order 
$O(n(G)+e(G))$
for a general simple graph $G$. 
This makes the algorithm very efficient for sparse
graphs, where $e(G)$ is of the same order that $n(G)$.

\begin{definition}
\label{shell-index}
Let 
$S_k\doteq C_k\setminus C_{k+1}, 0\leq k \leq k_{\max}-1$,
$S_{k_{\max}}=C_{k_{\max}}$. 
We call $S_k$ the $k$-shell of $G$,
and if $v\in S_k$ we say that $v$ has {\em shell-index} $k$,
$sh(v)=k$.
\end{definition} 

Notice that $C_k$ is thus the union of all shells $S_s$ 
with $s\ge k$, and that the shells are pairwise disjoint.

\begin{definition} 
Every connected component of $S_s$ will be called {\em cluster}.
\end{definition} 

Each shell $S_s$ is thus composed by clusters $Q_s^m$, 
such that 
$S_s=\cup \{Q_s^m :1\leq m \leq q_{\max}(s)\}$, 
where $q_{\max}(s)$ is 
the number of clusters in $S_s$.

In this paper we address the following expansion problem: 
given a $k$-edge-connected graph $G_2$,
give conditions under which the result of adjoining
to $G_2$ a graph $G_1$ will be also $k$ edge-connected
(see Corollary \ref{conn3} below).  

\subsection{$k$-connectivity.} 
We may analyze connectivity in a strict or wide sense.

\begin{definition}
Let $A\subset V(G)$.
\begin{enumerate}
\item
We say that $A$ is $k$-connected in strict sense
if $G(A)$ is $k$-connected, {\em i.e} every cut
in $G(A)$ has at least $k$ edges. That is,
given  $u,v\in A$, there exist at least $k$
edge disjoint paths form $u$ to $v$ in $G(A)$.
\item
We say that $A$ is $k$-connected in wide sense if 
every cut $[X,\bar X]$
in $G$ such that $X\cap A\not =\emptyset$ and 
$\bar X\cap A\not =\emptyset$
has at least $k$ edges. That is,
given  $u,v\in A$, there exist at least $k$
edge disjoint paths form $u$ to $v$ in $G$.
\end{enumerate}
\end{definition}
Of course, if $A$ is $k$-connected
in strict sense, it is also $k$-connected in wide sense.

\begin{Lema}
Let $A,B\subset V$ and $A\cap B\not = \emptyset$. Then
\begin{enumerate}
\item
If $A$ and $B$ are $k$-connected in strict sense, so is $A\cup B$.
\item
If $A$ and $B$ are $k$-connected in wide sense, so is $A\cup B$.
\end{enumerate}
\end{Lema}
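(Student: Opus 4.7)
The plan is to prove both parts by the same pivot argument: a vertex $v\in A\cap B$, which exists by hypothesis, must lie on one side of any candidate cut, forcing that side to meet both $A$ and $B$; then the other side still meets at least one of $A$ or $B$, and the cut restricts to a legitimate cut of that set, where the hypothesis applies directly.

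For part (1), I would start with an arbitrary cut $[S,T]$ of $G(A\cup B)$, so $S$ and $T=(A\cup B)\setminus S$ are both nonempty. Fix $v\in A\cap B$ and, without loss of generality, assume $v\in S$; then $S\cap A$ and $S\cap B$ are both nonempty. Pick any $w\in T$: it belongs to $A$ or to $B$, and by symmetry we may suppose $w\in A$. Then $(S\cap A,T\cap A)$ is a partition of $A$ into two nonempty sets, so $[S\cap A,T\cap A]$ is a cut of $G(A)$. By hypothesis this cut has at least $k$ edges, and each such edge is an edge of $G(A\cup B)$ with one endpoint in $S$ and the other in $T$. Hence $|[S,T]|\geq k$.

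For part (2), the argument is essentially identical but performed in $G$ rather than in $G(A\cup B)$. Given a cut $[X,\bar X]$ of $G$ with $X\cap(A\cup B)\neq\emptyset$ and $\bar X\cap(A\cup B)\neq\emptyset$, put $v\in A\cap B$ on one side, say $v\in X$; this makes $X\cap A$ and $X\cap B$ both nonempty. Since $\bar X$ meets $A\cup B$, it meets at least one of $A$ or $B$; by symmetry assume $\bar X\cap A\neq\emptyset$. Then $[X,\bar X]$ itself is a cut of $G$ separating $A$ in the sense of the wide-sense definition, so it has at least $k$ edges.

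There is no substantive obstacle: the heart of the matter is simply observing that the pivot $v\in A\cap B$ collapses the problem of separating $A\cup B$ to the problem of separating whichever of $A$ or $B$ the ``other side'' happens to touch. The only thing to be careful about is the symmetric case analysis (which of $A$, $B$ contains the witness on the side opposite to $v$), and the fact that exactly the same pivot reduction works in both the strict and the wide settings, because in each setting the hypothesis on $A$ (resp.\ $B$) is a statement about cuts whose two sides both intersect $A$ (resp.\ $B$), which is precisely what the reduction produces.
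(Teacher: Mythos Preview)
Your proof is correct and follows essentially the same pivot idea as the paper: use a common vertex of $A\cap B$ to reduce any cut separating $A\cup B$ to one that separates $A$ or $B$ alone. The paper phrases the reduction via Menger's equivalence (producing $k$ edge-disjoint paths from the pivot to a vertex on the far side, each of which must cross the cut), whereas you stay entirely in the language of cuts and avoid the preliminary case split on where $u,v$ lie; this makes your version marginally more streamlined, but the underlying argument is identical.
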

\begin{proof}\hfill
\begin{enumerate}
\item
Let $u,v \in A\cup B$. 
If $u,v\in A$ or $u,v\in B$, then there are $k$
disjoint paths in $G(A)$ or in $G(B)$ from $u$ to $v$,
and in any case there are $k$ disjoint paths in $G(A\cup B)$. 
Suppose then that  $u\in A$, $v\in B$,
and let $[X,\bar X]$ be a cut in $G(A\cup B)$ with 
$u\in X, v\in \bar X$. Suppose that $s\in A\cap B$ and,
without loss of generality, assume that $s\in X$.
Then there are $k$ disjoint paths in $G(B)$ from $s$ to $v$,
whence $|[X,\bar X]|\geq k$.
\item
Let $u,v \in A\cup B$. If $u,v\in A$ or $u,v\in B$, then there are $k$
disjoint paths in $G$ from $u$ to $v$. 
Suppose then that  $u\in A$, $v\in B$,
and let $[X,\bar X]$ be a cut with 
$u\in X, v\in \bar X$. Suppose that $s\in A\cap B$ and,
without loss of generality, assume that $s\in X$.
Then there are $k$ disjoint paths in $G$ from $s$ to $v$,
whence $|[X,\bar X]|\geq k$.
\end{enumerate}
\end{proof}
This lemma has as a consequence that given $v\in V$,
\[
\cup\{A :A\text{ is $k$-connected in strict sense and } v\in A\} 
\]
is $k$ connected in strict sense (and we call it the $k$-connected
component of $v$ in strict sense), and that
\[
\cup\{A :A\text{ is $k$-connected in wide sense and } v\in A\} 
\]
is $k$ connected in wide sense (and we call it the $k$-connected
component of $v$ in wide sense).
\begin{Lema}
For $k=1,2$ the $k$-connected components in strict sense
and in wide sense are the same. This is not true
for $k\geq 3$.
\end{Lema}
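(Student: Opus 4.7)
The plan is to handle $k=1,2$ directly and exhibit a compact family of counterexamples for $k\geq 3$. The case $k=1$ is essentially trivial: the wide-sense $1$-connected component of $v$ is just the ordinary connected component of $v$ in $G$, and the induced subgraph on a connected component is connected, so this same set is also the strict-sense $1$-connected component.

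For $k=2$, let $C$ denote the wide-sense $2$-connected component of $v$; I only need to show $G(C)$ is $2$-edge-connected, since the reverse inclusion (strict implies wide) is immediate from the definitions. Given $u,w\in C$, the wide-sense hypothesis provides two edge-disjoint $u$-$w$ paths $P_1,P_2$ in $G$, and I would argue that every intermediate vertex $x$ on $P_1\cup P_2$ must already lie in $C$. For $x\in P_1$, take $Q_1$ to be the prefix of $P_1$ from $u$ to $x$, and take $Q_2$ to be $P_2$ followed by the reverse of the suffix of $P_1$ from $w$ back to $x$; since $P_1$ and $P_2$ are edge-disjoint and the prefix and suffix of $P_1$ at $x$ share no edges, $Q_1$ and $Q_2$ are edge-disjoint $u$-$x$ walks in $G$. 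Hence $\{u,x\}$ is wide-sense $2$-connected, and the preceding union lemma combined with the maximality of $C$ yields $x\in C$. Thus $P_1,P_2$ lie entirely in $G(C)$, so $G(C)$ is $2$-edge-connected.

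For $k\geq 3$, the counterexample is the graph $G$ on vertices $\{u,v,h_1,\ldots,h_{k-1}\}$ whose edges are $uv$ together with $uh_i$ and $vh_i$ for each $1\leq i\leq k-1$. Then $u$ and $v$ are joined by $k$ edge-disjoint paths (the direct edge plus the $k-1$ length-two paths through the $h_i$'s), so $\{u,v\}$ is wide-sense $k$-connected. On the other hand each $h_i$ has degree $2$ in $G$, so no set containing some $h_i$ can be wide-sense $k$-connected for $k\geq 3$; hence the wide-sense $k$-connected component of $u$ is exactly $\{u,v\}$. But $G(\{u,v\})$ is a single edge, which is not $k$-edge-connected for $k\geq 2$, so the strict-sense $k$-connected component of $u$ is strictly smaller than $\{u,v\}$.

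The delicate step is the splicing argument in the $k=2$ case, where one has to verify carefully that $Q_1$ and $Q_2$ are genuinely edge-disjoint. This is precisely the step that refuses to generalize to $k\geq 3$: one cannot manufacture $k$ edge-disjoint $u$-$x$ walks from $k$ edge-disjoint $u$-$w$ walks by this splicing procedure when $k\geq 3$, which is the structural reason behind the counterexamples above.
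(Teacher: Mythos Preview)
The paper actually states this lemma without proof, so there is no argument there to compare yours against; your proposal supplies what the paper omits, and it is correct.

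A couple of minor remarks. In the $k=2$ step you produce edge-disjoint \emph{walks} $Q_1,Q_2$ from $u$ to $x$ rather than paths; that is fine, since two edge-disjoint $u$-$x$ walks already force every $u,x$-separating edge cut to contain at least two edges, whence $\{u,x\}$ is wide-sense $2$-connected and Lemma~1 together with the maximality of $C$ absorbs $x$ into $C$ as you say. In the $k\ge 3$ counterexample you assert that ``no set containing some $h_i$ can be wide-sense $k$-connected for $k\ge 3$''; strictly speaking the singleton $\{h_i\}$ is vacuously wide-sense $k$-connected, so what you mean (and what suffices) is that no set containing both $u$ and some $h_i$ can be, because the degree-$2$ cut isolating $h_i$ has fewer than $k$ edges. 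With that small rewording the argument is complete.
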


\subsection{An expansion theorem }
Let $G$ be a simple graph.
Let $Q,C\subset V(G)$, 
and set $C'\doteq Q\cup C$, 
$G'\doteq G(C')$. 
We assume in the sequel that $Q$ and $C$ are non void and 
that $Q\cap C=\emptyset$.
We define, for $x,y\in Q$, the {\em contracted distance}
\[
\rho_{C'/C}(x,y)\doteq 
\min\{
\rho_{G(Q)}(x,y),
\rho_{G'}(x,C)+\rho_{G'}(y,C)
\}\\
\]
and for $x\in C',y\in C$
\[
\rho_{C'/C}(x,y)=\rho_{C'/C}(y,x)\doteq \rho_{G'}(x,C)
\]
If $x\in C'$ and $A\subset C'$,
we set $\rho_{C'/C}(x,A)\doteq \min_{a\in A} \rho_{C'/C}(x,a)$.
We shall use also the notation
\[
\rho_{C'/C}\doteq\max_{x,y\in C'}\rho_{C'/C}(x,y)
\]
for the {\em contracted diameter} of $C'$.

Notice that with these definitions,
if $\rho_{C'/C}(x,y)=2$ for some $x,y\in C'$, 
then there exists $z\in C'$ 
such that $\rho_{C'/C}(x,z)=\rho_{C'/C}(z,y)=1$.
Notice also that $\rho_{C'/C}$ is,
whence the notation,
the pseudo-distance induced in $C'$
by the distance in the quotient graph
$G'/G(C)$, obtained from $G'$ by contracting
$C$ to a vertex (this quotient graph is not
a simple graph). 

We shall also use the notations 
\begin{eqnarray*}
\partial^j Q&\doteq&\{x\in Q: |[x,C]|\geq j\}\\
\bar{\partial}^j Q&\doteq& \{x\in Q: |[x,C]|< j\}=
Q\setminus\partial^j Q
\end{eqnarray*}

Under these settings, we consider also
\[
\Phi_{C'/C}\doteq\sum_{x\in Q}
\min\{\max\{1,|[x,\bar{\partial}^2 Q]|\},|[x,C]|\}
\]
As $Q$ and $C$
will be fixed in the following of this Section,
we shall freely omit the subindex ${C'/C}$
when speaking of $\rho_{C'/C}$ and $\Phi_{C'/C}$. 
For $v\in C'$, $N'(v)$ denotes its neighborhood in $G'$.

In this general framework, we have
\begin{Teorema}
\label{cuts1}
If $\rho_{C'/C}\leq 2$,
$[S,\bar{S}]$ is an edge cut in $G'$ such that
$C\subset S$, and we let $S_1\doteq S\cap Q$
\begin{enumerate}
\item
\label{110}
If $\max _{\bar{s}\in \bar{S}} \rho(\bar{s},S)=1$, 
then $|[S,\bar{S}]|\geq \max _ {\bar{s}\in \bar{S}} |N'(\bar{s})|$.
\item
\label{120}
If $\max _{\bar{s}\in \bar{S}} \rho(\bar{s},S)=1$, 
then $|[S,\bar{S}]\geq |\bar{S}|$.
\item
\label{130}
If $\max _{\bar{s}\in \bar{S}} \rho(\bar{s},S)=2$, 
then $|\bar{S}|>\min_{\bar{s}\in\bar{S}} |N'(\bar{s})|$.
\item
\label{140}
If $\max _{\bar{s}\in \bar{S}} \rho(\bar{s},S)=2$, 
then $\max _{{s}\in {S}} \rho({s},\bar{S})=1$.
\item
\label{150}
If $\max _{{s}\in {S_1}} \rho({s},\bar{S})=1$,
then $|[S_1,\bar{S}]|\geq max _{s\in S_1} (|N'(s)|-|N'(s)\cap C|)$.
\item
\label{160}
If $\max _{{s}\in {S_1}} \rho({s},\bar{S})=1$,
then $|[S_1,\bar{S}]|\geq |S_1|$.
\end{enumerate}
\end{Teorema}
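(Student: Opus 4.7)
The plan is to handle the six items in the paper's order, exploiting that they form three pairs sharing a common hypothesis and that within each pair the sharper statement refines the cruder one by the same edge-counting trick.

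The backbone is a translation between the contracted metric and honest $G'$-adjacency: for $x,y\in Q$ one has $\rho_{C'/C}(x,y)=1$ iff $xy\in E(G')$ (because $\rho_{G'}(x,C)+\rho_{G'}(y,C)\ge 2$ forces the minimum to be realised by the $G(Q)$-clause), while for $\bar s\in Q$ and $c\in C$, $\rho_{C'/C}(\bar s,c)=1$ just says that $\bar s$ has a $G'$-neighbour in $C$. Under the hypothesis of (1)--(2), this translation gives every $\bar s\in\bar S$ at least one $G'$-edge to $S$, and summing over $\bar S$ proves (2). For (1) I would pick $\bar s^\ast\in\bar S$ maximising $|N'(\bar s^\ast)|$ and write $N'(\bar s^\ast)$ as the disjoint union $(N'(\bar s^\ast)\cap S)\cup(N'(\bar s^\ast)\cap\bar S)$. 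The first piece contributes $|N'(\bar s^\ast)\cap S|$ edges of $[S,\bar S]$ incident to $\bar s^\ast$; for each $w$ in the second piece, the translation applied to $w\in\bar S$ produces an edge $(w,s_w)\in[S,\bar S]$ with $s_w\in S$, and these edges are pairwise distinct (different $w$-endpoints) and disjoint from the first family (none touches $\bar s^\ast$), yielding $|N'(\bar s^\ast)|$ edges in $[S,\bar S]$. Items (5) and (6) are the same argument transplanted to $S_1\subset Q$ and $\bar S\subset Q$; since one now operates entirely inside $G(Q)$, the bound in (5) counts only the $Q$-neighbours of $s$, which accounts for the term $|N'(s)|-|N'(s)\cap C|$.

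For (3) and (4) I would fix $\bar s^\ast\in\bar S$ with $\rho(\bar s^\ast,S)=2$. Combined with $\rho_{C'/C}\le 2$, this gives $\rho_{C'/C}(\bar s^\ast,s)=2$ for every $s\in S$; reading the two clauses of the definition, this rules out any $G'$-edge from $\bar s^\ast$ to $C$ or to $S_1$, so $N'(\bar s^\ast)\subset\bar S$. Counting $\bar s^\ast$ itself on top yields $|\bar S|\ge|N'(\bar s^\ast)|+1>\min_{\bar s\in\bar S}|N'(\bar s)|$, proving (3). For (4), I would argue by contradiction: assume $s^\ast\in S$ satisfies $\rho(s^\ast,\bar S)\ge 2$ and pick a length-$2$ contracted path from $s^\ast$ to $\bar s^\ast$. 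If $s^\ast\in C$, the path exhibits a vertex $v\in N'(\bar s^\ast)\subset\bar S$ that is $G'$-adjacent to $C$, contradicting $\rho_{G'}(v,C)\ge 2$ (which for $s^\ast\in C$ is just $\rho(s^\ast,v)\ge 2$). If $s^\ast\in S_1$, the estimate $\rho_{G'}(s^\ast,C)+\rho_{G'}(\bar s^\ast,C)\ge 1+2>2$ forces the minimum defining $\rho_{C'/C}(s^\ast,\bar s^\ast)$ to be attained by the $G(Q)$-clause, so there is a length-$2$ $G(Q)$-path whose midpoint lies in $N'(\bar s^\ast)\subset\bar S$ and is $G'$-adjacent to $s^\ast$, again contradicting $\rho(s^\ast,\bar S)\ge 2$.

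The delicate step is (4): one must track which clause of the minimum in the definition of $\rho_{C'/C}$ realises the length-$2$ contracted path in each of the two cases $s^\ast\in C$ and $s^\ast\in S_1$, and it is precisely the sharpened bound $\rho_{G'}(\bar s^\ast,C)=2$ (rather than $1$) that rules out a length-$2$ contracted path routed through $C$ when $s^\ast\in S_1$. Everything else reduces to the warm-up translation and straightforward edge counting.
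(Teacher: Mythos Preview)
Your proof is correct and matches the paper's argument essentially point for point: items (1), (2), (3), (5), (6) are identical to the paper's proofs, right down to the decomposition of $N'(\bar s)$ into its $S$- and $\bar S$-parts and the observation that the edges picked for distinct neighbours are distinct.

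The only packaging difference is in item (4). The paper, having recorded just before the theorem that $\rho_{C'/C}(x,y)=2$ always admits a midpoint $z$ with $\rho_{C'/C}(x,z)=\rho_{C'/C}(z,y)=1$, proves (4) directly: for each $s\in S$ the midpoint $\bar s'$ on a contracted length-$2$ path from $\bar s_0$ to $s$ must lie in $\bar S$ (since $\rho(\bar s_0,S)=2$), and $\rho(\bar s',s)=1$ finishes. Your contradiction argument with the case split $s^\ast\in C$ versus $s^\ast\in S_1$ is exactly the computation that verifies this midpoint property in the two relevant cases, so you have simply unpacked what the paper had already isolated as a lemma.
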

\begin{proof}\hfill
\begin{enumerate}
\item
Suppose that 
for any $\bar{s}\in \bar{S}$: $\rho(\bar{s},S)=1$.
Let $\bar{s}\in\bar{S}$.
Then we have $k_1$ edges $\bar{s}s_i,1\leq i\leq k_1$
with $s_i\in S$ and (eventually) $k_2$ edges $\bar{s}\bar{s}_j$,  
$\bar{s}_j\in \bar{S}$.
But each $\bar{s}_j$ satisfies $\rho(\bar{s}_j,S)=1$,
thus we have $k_2$ new edges (here we used that $G'$ is simple,
because we assumed that the vertices $\bar{s}_j$ are different)
$\bar{s}_js'_j$, with $s'_j\in S$,
whence 
\[
|[S,\bar{S}]|\geq k_1+k_2=|N'(\bar{s})|
\]
\item
This follows at once if we notice
that in this case for each $\bar{s}\in \bar{S}$
there is at least one $s\in S$ such that $\bar{s}s\in [\bar{S},S]$.
\item
Notice that if $\rho(\bar{s},S)=2$,
then $\{\bar{s}\}\cup N'(\bar{s})\subset \bar{S}$.
\item
Let $\bar{s}_0\in \bar{S}$ be such that $\rho(\bar{s}_0,S)=2$.
Then for each $s\in S$, as  $\rho(\bar{s}_0,s)=2$,
there exists $\bar{s}'$ such that 
$\rho(\bar{s}_0,\bar{s}')=\rho(\bar{s}',s)=1$.
But, again, as $\rho(\bar{s}_0,S)=2$, 
it follows that $\bar{s}'\in \bar{S}$,
hence $\rho(s,\bar{S})=1$.
\item
For each $s\in S_1$ we have 
\[
N'(s)=(N'(s)\cap \bar{S})\cup (N'(s)\cap S_1) \cup (N'(s)\cap C)
\]
and, if $\max _{s\in S_1}\rho(s,\bar{S})=1$,
then for each $s'\in N'(s)\cap S_1$ we have at least 
one edge in $[s',\bar{S}]$,
thus $|N'(s)|\leq |[S_1,\bar{S}]|+|(N'(s)\cap C)|$.
\item
Our last statement follows noticing that 
$\max _{s\in S_1}\rho(s,\bar{S})=1$ means that
for $s\in S_1$ $|[s,\bar{S}]|\geq 1$,
and these sets are pairwise disjoint 
subsets of $[S_1,\bar{S}]$.
\end{enumerate}
\end{proof}

\begin{figure}[h!]
  \subfigure[]{
    \includegraphics[width=0.3\textwidth]{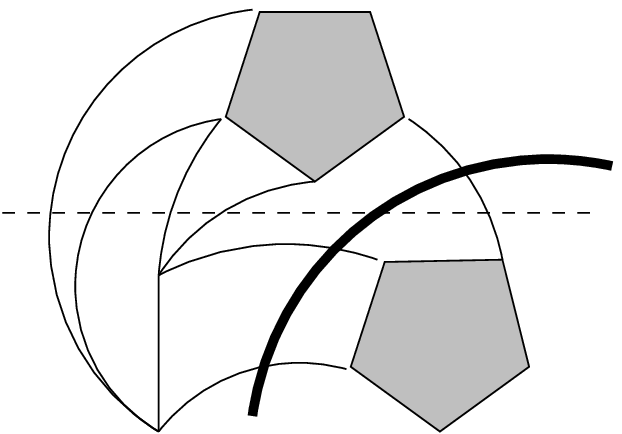}
  }
  \subfigure[]{
    \includegraphics[width=0.3\textwidth]{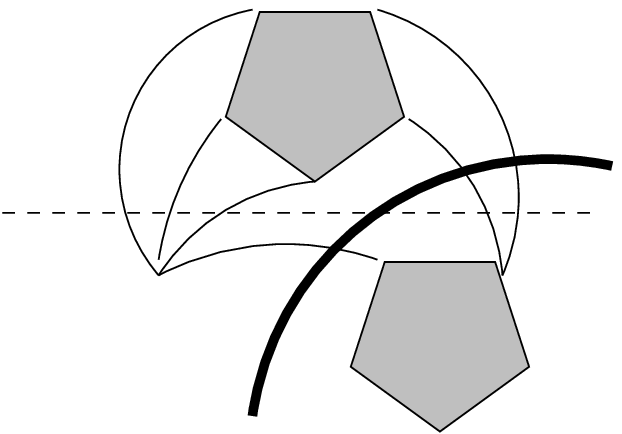}
  }\\
  \subfigure[]{
    \includegraphics[width=0.3\textwidth]{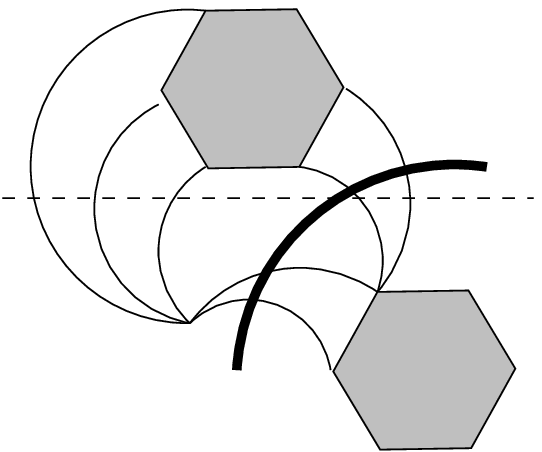}
  }
\hspace{0.1\textwidth}
  \subfigure[]{
    \includegraphics[width=0.22\textwidth]{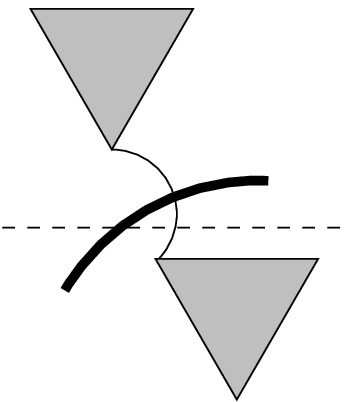}
  }
  \caption{\label{thm.ex} 
    {\bf Conventions:}
    1.Filled polygons represent cliques, and curved arcs represent edges.
    2.The dotted line separates 
    $C$ (the upper set of vertices) from 
    $Q$ .
    3. The widest arc shows the cut $[S,\bar{S}]$.
    4. $k=\min_{v\in Q} |N'(v)|$.
    {\bf Descriptions:} 
(a) Here $|[S,\bar{S}]|=3<k=4$, $|S\cap Q|=2$,
$|\bar{S}|=5$, $\Phi=3$, $S\cap Q=\partial^2 Q$.
(b) Here $|[S,\bar{S}]|=3<k=4$, $|S\cap Q|=1$,
$|\bar{S}|=5$, $\Phi=3$, $S\cap Q\not =\partial^2 Q$. 
(c) Here $|[S,\bar{S}]|=4<k=5$, $|S\cap Q|=1$,
$|\bar{S}|=6$, $\Phi=3$, $S\cap Q\not =\partial^2 Q$. 
(d) Here $|[S,\bar{S}]|=1<k=2$, $|S\cap Q|=0$,
$|\bar{S}|=3$, $\Phi=1$, $S\cap Q =\partial^2 Q=\emptyset$. 
}
\end{figure}

\begin{Corolario}
\label{cuts2}
Assume that in addition to the hypotheses of Theorem \ref{cuts1},
we have $|[S,\bar{S}]|<\min _{v\in Q} |N'(v)|$. Then
\begin{enumerate}
\item
\label{210}
$\max_{\bar{s}\in \bar{S}}\rho(\bar{s},S)=2$.
\item
\label{220}
$\max_{ s\in S} \rho(s,\bar{S})=1$.
\item
\label{230}
$|[\bar{S},C]|\geq 1$.
\item
\label{240}
$|S_1|<|[S,\bar{S}]|<\min_{v\in Q} |N'(v)|<|\bar{S}|$.
\item
\label{250}
$S\cap Q\subset \partial^2 Q, \bar{S}\supset \bar{\partial}^2 Q$.
\item
\label{260}
$
\Phi \leq |[S,\bar{S}]|
$.
\end{enumerate}
\end{Corolario}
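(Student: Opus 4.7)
My plan is to derive the six items sequentially, each time combining Theorem \ref{cuts1} with the hypothesis $|[S,\bar S]|<k$, where $k:=\min_{v\in Q}|N'(v)|$. Items (1)--(5) are fairly mechanical; the real work is the charging argument needed for (6).

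For (1), I argue by contradiction: if $\max_{\bar s\in\bar S}\rho(\bar s,S)=1$, then part (1) of Theorem \ref{cuts1} produces some $\bar s\in\bar S$ with $|[S,\bar S]|\geq|N'(\bar s)|$, and $C\subset S$ forces $\bar s\in Q$, so $|[S,\bar S]|\geq k$, a contradiction. Item (2) is then part (4) of Theorem \ref{cuts1} applied to (1). For (3), I pick $\bar s_0\in\bar S$ with $\rho(\bar s_0,S)=2$; since $C\subset S$, this gives $\rho_{G'}(\bar s_0,C)\geq 2$, and the bound $\rho_{C'/C}\leq 2$ forces equality. Any length-two path $\bar s_0\to w\to c$ with $c\in C$ must have $w\in\bar S$ (otherwise $\bar s_0$ would have a neighbor in $S$, contradicting $\rho(\bar s_0,S)=2$), producing an edge $wc\in[\bar S,C]$. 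For (4), the middle inequality is the hypothesis; on the right, for $\bar s_0$ as in (3) all of $N'(\bar s_0)\cup\{\bar s_0\}$ lies in $\bar S$, so $|\bar S|\geq|N'(\bar s_0)|+1>k$; on the left, part (6) of Theorem \ref{cuts1} yields $|S_1|\leq|[S_1,\bar S]|\leq|[S,\bar S]|$, and equality is ruled out by (3), since it would force $|[C,\bar S]|=0$.

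For (5), I again argue by contradiction: if some $s_0\in S_1$ has $|[s_0,C]|\leq 1$, part (5) of Theorem \ref{cuts1} gives $|[S_1,\bar S]|\geq|N'(s_0)|-|N'(s_0)\cap C|\geq k-1$, and combined with $|[S,\bar S]|\leq k-1$ this again forces $|[C,\bar S]|=0$, contradicting (3). Hence $S_1\subset\partial^2 Q$; since $C\subset S$ gives $\bar S\subset Q$ and thus $Q=S_1\cup\bar S$, we conclude $\bar{\partial}^2 Q\subset Q\setminus S_1=\bar S$.

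Item (6) is the main obstacle, and I would handle it by a charging scheme that sends each summand $t_x:=\min\{\max\{1,|[x,\bar{\partial}^2 Q]|\},|[x,C]|\}$ of $\Phi$ to distinct edges of $[S,\bar S]=[S_1,\bar S]\cup[\bar S,C]$. For $x\in\bar S$ I charge $t_x$ to edges of $[x,C]\subset[\bar S,C]$, which is legal since $t_x\leq|[x,C]|$ by definition. For $x\in S_1$ I charge $t_x$ to edges of $[x,\bar S]\subset[S_1,\bar S]$; legality here uses (5) and (2): item (5) gives $\bar{\partial}^2 Q\subset\bar S$, so $|[x,\bar{\partial}^2 Q]|\leq|[x,\bar S]|$, while item (2) yields $|[x,\bar S]|\geq 1$, hence $\max\{1,|[x,\bar{\partial}^2 Q]|\}\leq|[x,\bar S]|$ and $t_x\leq|[x,\bar S]|$. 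The per-vertex edge sets are pairwise disjoint on each side, and the two sides themselves are disjoint (edges of the first type have an endpoint in $C$, those of the second do not), so summing gives $\Phi\leq|[S_1,\bar S]|+|[\bar S,C]|=|[S,\bar S]|$. The subtlety to watch out for is precisely the $\max\{1,\cdot\}$ term, which is controlled by $|[x,\bar S]|$ only thanks to items (2) and (5).
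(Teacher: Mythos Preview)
Your proof is correct and follows essentially the same route as the paper, including the charging argument for (6). The one place you diverge is item~(3): the paper simply observes that by your item~(2) every $s\in S$ has $\rho(s,\bar S)=1$, and since $C$ is nonempty and $C\subset S$, any $c\in C$ already furnishes an edge in $[C,\bar S]$; your path-through-$\bar s_0$ argument via $\rho_{C'/C}\leq 2$ works but is more roundabout.
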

(See the examples in Figure \ref{thm.ex}.)
\begin{proof}
Points \ref{210} and \ref{220} are obvious consequences 
of our new hypothesis and points \ref{110} and \ref{140}
in Theorem \ref{cuts1}.
To show point \ref{230}, 
notice that $\max_{ s\in S} \rho(s,\bar{S})=1$.
The first (from left to right) of the inequalities 
stated in point \ref{240}
follows
from point \ref{160} in Theorem \ref{cuts1},
and point \ref{230} in the present Theorem. 
The second of these inequalities is assumed 
by our additional hypothesis,
and the third follows immediately 
from point \ref{130} in Theorem \ref{cuts1}.

From point \ref{150} in Theorem \ref{cuts1} 
and $|[C,\bar{S}]|\geq 1$
we obtain \[|[S,\bar{S}]|>
max _{s\in S_1} (|N'(s)|-|N'(s)\cap C|)\]
Thus for $s\in S_1$
\[
N'(s)>|[S,\bar{S}]|>(|N'(s)|-|N'(s)\cap C|)
\]
whence $|N'(s)\cap C|\geq 2$. 
Point \ref{250} follows immediately from this.

By our previous points,
if $s\in S\cap Q$ then
\[
|[s,\bar{S}]|\geq \max\{1,|[s,\bar{\partial}^2Q]|\}
\]
and of course for $\bar{s}\in \bar{S}$, 
$|[\bar{s},S]|\geq |[\bar{s},C]|$,
thus
\begin{eqnarray*}
|[S,\bar{S}]|&=&|[S\cap Q,\bar{S}]|+|[\bar{S},C]|\\
&\geq&\sum_{s\in S\cap Q}\max\{1,|[s,\bar{\partial}^2Q]|\}
+\sum_{\bar{s}\in \bar{S}} |[\bar{s},C]|\\
&\geq&
\Phi
\end{eqnarray*}
\end{proof}

\begin{figure}[h!]
  \subfigure[]{
    \includegraphics[width=0.3\textwidth]{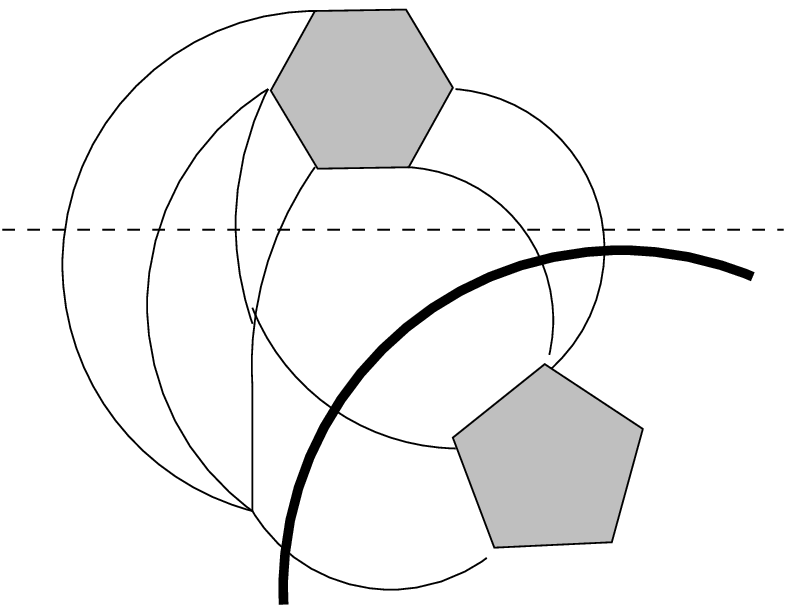}
  }
  \subfigure[]{
    \includegraphics[width=0.3\textwidth]{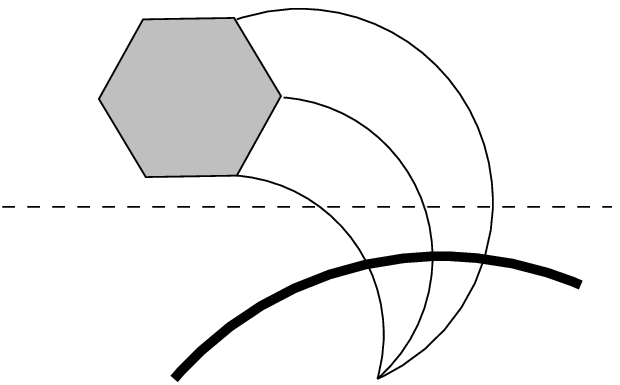}
  }
  \caption{\label{crll.ex} 
    {\bf Conventions:}
    1.Filled polygons represent cliques, and curved arcs represent edges.
    2.The dotted line separates 
    $C$ (the upper set of vertices) from 
    $Q$.
    3. The widest arc shows a minimal cut $[S,\bar{S}]$.
    {\bf Descriptions:} 
(a) Here $|[S,\bar{S}]|=k=4$, $\Phi=4$, $|\partial^1 Q|=3$.
(b) Here $|[S,\bar{S}]|=k=3$, $\Phi=1$, $|\partial^1 Q|=1$,
$Q=\partial^1 Q$. 
This example shows that Corollary \ref{conn3} includes an 
edge-connectivity version
of the Expansion Lemma in \cite{west:itgt}, Lemma 4.2.3.
}
\end{figure}

\begin{Corolario}
\label{conn3}
Let $k\leq\delta(G')$, and assume that 
\begin{enumerate}
\item
\label{320}
$G(C)$ is $\delta(G')$-edge connected
\item
\label{340}
$\rho_{C'/C}\leq 2$
\end{enumerate}
Then any of the following 
\begin{enumerate}
\item
$\Phi\geq k$
\item
$|\partial^1 Q|\geq k$ 
\item
$Q=\partial^1 Q$
\end{enumerate}
implies that  $G'$ is $k$-edge-connected.
\end{Corolario}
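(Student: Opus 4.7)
The plan is to take an arbitrary edge cut $[S,\bar S]$ of $G'$ and show that $|[S,\bar S]|\geq k$. I first split on how $C$ meets the cut. If both $C\cap S$ and $C\cap\bar S$ are non-empty, then $[S\cap C,\bar S\cap C]$ is a non-trivial edge cut of $G(C)$, so hypothesis~(\ref{320}) gives at least $\delta(G')\geq k$ such edges, all of them contained in $[S,\bar S]$. Otherwise, after possibly swapping $S$ and $\bar S$, I may assume $C\subset S$, in which case $\bar S\subset Q$. If $|[S,\bar S]|\geq \min_{v\in Q}|N'(v)|$, the bound is already proved since $\min_{v\in Q}|N'(v)|\geq\delta(G')\geq k$; so assume $|[S,\bar S]|<\min_{v\in Q}|N'(v)|$. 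Together with hypothesis~(\ref{340}), this places us precisely in the setting of Corollary~\ref{cuts2}.

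In this residual sub-case I dispatch the three alternatives as a chain. For alternative~(1), Corollary~\ref{cuts2}(\ref{260}) gives $|[S,\bar S]|\geq\Phi\geq k$ at once. For alternative~(2), I first establish the general inequality $\Phi\geq|\partial^1 Q|$: for every $x\in\partial^1 Q$ one has $|[x,C]|\geq 1$ and $\max\{1,|[x,\bar\partial^2 Q]|\}\geq 1$, so the summand of $\Phi$ corresponding to $x$ is at least $1$. Hence $|\partial^1 Q|\geq k$ forces $\Phi\geq k$, reducing to~(1). For alternative~(3), the key step is Corollary~\ref{cuts2}(\ref{240}), which asserts $|\bar S|>\min_{v\in Q}|N'(v)|\geq k$; since $\bar S\subset Q$ this forces $|Q|>k$, so $|\partial^1 Q|=|Q|>k$, reducing to~(2).

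The main obstacle, beyond organizing the case split, is this last reduction: condition~(3) by itself carries no numerical information, so the delicate observation is that Corollary~\ref{cuts2}(\ref{240}) is exactly what converts the structural equality $Q=\partial^1 Q$ into the cardinality bound $|\partial^1 Q|>k$ in the only sub-case where anything has to be proved. The auxiliary inequality $\Phi\geq|\partial^1 Q|$ is routine but worth isolating, since it is the bridge linking the purely combinatorial alternatives~(2) and~(3) to the weighted quantity $\Phi$ that Corollary~\ref{cuts2} actually controls.
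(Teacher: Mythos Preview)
Your proof is correct and follows the paper's overall architecture: the same case split on whether $C$ is separated by the cut, the same reduction to Corollary~\ref{cuts2}, and the same treatment of alternatives~(1) and~(2) (including the auxiliary inequality $\Phi\geq|\partial^1 Q|$). The one substantive difference is your handling of alternative~(3). You route it through Corollary~\ref{cuts2}(\ref{240}), extracting $|\bar S|>\min_{v\in Q}|N'(v)|\geq k$ and hence $|\partial^1 Q|=|Q|\geq|\bar S|>k$, thereby reducing~(3) to~(2). The paper instead argues directly from Corollary~\ref{cuts2}(\ref{210}): if $Q=\partial^1 Q$ then every $\bar s\in\bar S\subset Q$ has a neighbor in $C\subset S$, so $\rho(\bar s,S)=1$ for all $\bar s$, flatly contradicting $\max_{\bar s\in\bar S}\rho(\bar s,S)=2$. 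Your chain reduction $(3)\Rightarrow(2)\Rightarrow(1)$ is tidy and shows that~(3) is genuinely the strongest of the three hypotheses in the residual sub-case; the paper's route is shorter and uses the structural content of $Q=\partial^1 Q$ more transparently, without passing through any cardinality estimate. Both are valid.
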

(See the examples in Figure \ref{crll.ex}.)

\begin{proof}
Let $[S,\bar{S}]$ be any cut in $G'$. 
We shall show that,
under the listed hypotheses and any of
the alternatives, $|[S,\bar{S}]|\geq k$.

If $S\cap C\not =\emptyset$ and 
$\bar{S}\cap C\not = \emptyset$,
then, as
\[
[S\cap C,\bar{S}\cap C]\subset[S,\bar{S}]
\]
is a cut in $G(C)$, which we assumed to be $k$-edge connected,
we obtain $|[S,\bar{S}]|\geq k$.

Without loss of generality, we assume in the sequel that
$C\subset S$. We argue by contradiction
assuming that there exists some $S$ such that $|[S,\bar{S}]|<k$,
so that we are under the hypothesis of Corollary \ref{cuts2}.

The first of our alternative hypothesis contradicts
point \ref{260} in the conclusions of Corollary \ref{cuts2}.

When $v\in \partial^1Q$, 
\[
\min\{\max\{1,|[v,\bar{\partial}^2 Q]|\},|[v,C]|\}\geq 1\] 
so that we have
$
|\partial^1 Q| \leq\Phi
$,
{\em i.e.} the second of our alternative hypothesis 
implies the first one.

To finish our proof,
notice that if $Q=\partial^1 Q$, as $\bar{S}\subset Q$,
we have $\rho(\bar{s},S)=1$ for any $\bar{s}\in \bar{S}$,
contradicting point \ref{210} in the conclusions 
of Corollary \ref{cuts2}.
\end{proof}
\begin{definition}
As this Corollary will be a key for our later results,
we shall set for future reference
\[
\Psi_{C'/C}(k,G)\doteq \max\{
\Phi_{C'/C}-k,
|\partial^1 Q|-k,
|\partial^1 Q|-|Q|
\},\quad \text{for }k\leq\delta(G')
\]
so that the validity of some of the three last alternative hypotheses
in Corollary~\ref{conn3} could be re-stated as 
$\Psi_{C'/C}(k,G)\geq 0$. 
\end{definition}
\subsubsection{Remarks}
Corollary \ref{conn3} is 
related to a well known theorem of Plesn\'{\i}k 
(see \cite{plesnik:cgoagd}, Theorem 6),
which states that in a simple graph
of diameter $2$ the edge connectivity
is equal to the minimum degree.

\subsection{Edge-connectivity and cores decomposition}

Let us introduce the following
\begin{definition} 
\label{core_x}
Consider a graph $G$, its cores decomposition 
(see Definition \ref{k-core})
\[
V(G)=\cup \{C_k: 0\leq k \leq k_{\max}\}
\]
and $A\subset V(G)$.
\begin{enumerate}
\item
We say that $A$ is $k$-core connected in strict sense
if $A\cap C_k$ is $k$-connected in strict sense.
\item
We say that $A$ is $k$-core connected in wide sense
if $A\cap C_k$ is $k$-connected in wide sense in $G(C_k)$.
\item
We say that $A$ is core connected in strict sense
if $A\cap C_k$ is $k$-connected in strict sense for all $k$
such that $A\cap C_k\not =\emptyset$.
\item
We say that $A$ is core connected in wide sense
if $A\cap C_k$ is $k$-connected in wide sense in $G(C_k)$ for all $k$
such that $A\cap C_k\not =\emptyset$.
\item
\label{core_x:k}
We say that $G$ is $k$-core connected
when $V(G)$ is $k$-core connected.
\item
\label{core_x:graph}
We say that $G$ is core-connected
when $V(G)$ is core-connected.
\end{enumerate}
\end{definition}

Next, we shall describe two algorithms that 
provide a mechanism to build (hopefully big) 
core-connected sets of vertices,
in strict and wide sense respectively.
Both algorithms proceed recursively,
starting from the highest core,
looking for a not-yet-joined cluster 
able to be joined,
joining it, and restarting from this new set of vertices.
The difference between the algorithms
lies in the meaning of ``able to be joined''.    

When $Q$ is a cluster, we denote $k(Q)$ its shell index. 
Let $\mathfrak{Q}$ be a family of clusters,
and let $k(\mathfrak{Q})\doteq\max\{k(Q):Q\in\mathfrak{Q}\}$ 
denote
the maximum shell index of the clusters in $\mathfrak{Q}$.

\begin{algorithm}[H]
  \caption{strict sense core-connected}\label{ss_algorithm}
  \SetLine
  \KwIn{$\mathfrak{Q}$, the family of all clusters of the k-core decomposition of a graph $G$}
\KwOut{$C\subset V$, core-connected in strict sense}
  \KwInit{$C\leftarrow \emptyset$, $k\leftarrow k_{max}$}
  \Begin{
     \While {$C = \emptyset$ {\bf and} $\mathfrak{Q} \neq \emptyset$ {\bf and} $k \geq 2$}{
        $k\leftarrow k(\mathfrak{Q})$ \; \label{rest0} 
        \If {there is some $Q\in\mathfrak{Q}$ satisfying: 
	  $\left\{
          \begin{array}{l}
          k(Q)=k \\
          \rho_Q\leq 2 \\
          \end{array} \right]$
          }{
          $C\leftarrow C\cup Q$ \;
        }
        $\mathfrak{Q}\leftarrow \mathfrak{Q}\setminus \{Q\in\mathfrak{Q}:k(Q)=k\}$ \; \label{qfirst}
     } \label{endloop1}
     \While{$\mathfrak{Q} \neq \emptyset$ {\bf and} $k \geq 2$}{  
          $k\leftarrow k(\mathfrak{Q})$ \; \label{rest1} 
          \While {there is some $Q\in\mathfrak{Q}$ satisfying:
             $\left\{ 
              \begin{array}{l}
	      k(Q)=k \\
	      \rho_{C\cup Q/C}\leq 2 \\
	      \Psi_{C\cup Q/C}(k,G)\geq 0 \\
	      \end{array}
	     \right]$ \label{qselec}  \label{loop3}
	     }{
             $C\leftarrow C\cup Q$ \;
	     $\mathfrak{Q}\leftarrow \mathfrak{Q}\setminus \{Q\}$ \;
          }
          $\mathfrak{Q}\leftarrow \mathfrak{Q}\setminus \{Q\in\mathfrak{Q}:k(Q)=k\}$ \; \label{elim}
     } \label{fin}
  }
\end{algorithm}
(see Figure~\ref{alg.il} for illustration)

\begin{Teorema}
Algorithm~\ref{ss_algorithm} always stops,
and when it stops the set $C$ is core connected in strict sense.
\begin{proof} \hfill
\begin{itemize}
\item For the first {\bf while} loop, step~\ref{rest0} computes the maximum
$k$ for the actual family of clusters, 
while step \ref{qfirst} deletes from $\mathfrak Q$
all clusters with shell index $k$.  
As a consequence, $k$ is strictly
decreasing, and when the algorithm arrives to step~\ref{endloop1}, either
$\mathfrak{Q}$ is empty or $C$ has a cluster verifying the hypothesis of
Corollary~\ref{conn3}.
\item The second {\bf while} loop will also finish because 
steps~\ref{rest1} and~\ref{elim} assure that $k$ is strictly decreasing.
\item The nested {\bf while} loop will finish 
because the family  $\mathfrak{Q}$ is finite. 

\item 
Assume that the actual $C$ is core connected
in strict sense when we arrive to step~\ref{qselec}.
By construction, $C\cap C_j=C$ when $j\leq k$,
and $C\cap C_j$ are previous instances of $C$ when $j>k$
(in fact, this instances are obtained each time
that we arrive to step~\ref{fin}).
The new $C$, let us call it $C'$ for a moment, 
has the same intersections with $C_j$
for $j>k$, 
and when $j\leq k$ the intersection is $C'=C\cup Q$,
that is $k$ connected in strict sense by
Corollary~\ref{conn3},
as the conditions for the selection of $Q$ in step~\ref{qselec}
match the hypothesis of Corollary~\ref{conn3}. 
Thus, all the instances of $C$
during the algorithm are core connected in
strict sense, 
whence the final $C$ is core connected in strict sense.
\end{itemize}
\end{proof}
\end{Teorema}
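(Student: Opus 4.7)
The plan is to establish the two assertions of the theorem in turn: the algorithm halts on every input, and the set $C$ returned is core-connected in strict sense. Most of the work is in the second assertion.

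Termination follows by inspecting the three nested \textbf{while} loops. In the outer loops (the first \textbf{while} and the outer part of the second one), the quantity $k(\mathfrak{Q})$ strictly decreases between consecutive iterations because lines~\ref{qfirst} and~\ref{elim} purge every cluster of the current shell-index; since $k\ge 2$ is a lower bound, each loop performs at most $k_{\max}$ iterations. The inner loop at line~\ref{loop3} is finite because each pass removes one $Q$ from the finite family $\mathfrak{Q}$.

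For correctness I would carry, as a loop invariant, the statement that the current $C$ is core-connected in strict sense, and argue by induction on the updates $C\mapsto C'\doteq C\cup Q$. The invariant holds vacuously at $C=\emptyset$. The inductive step rests on the decomposition by shell: for $j>k(Q)$ the intersection $C'\cap C_j=C\cap C_j$ is unchanged and $j$-connected by hypothesis; for $j\le k(Q)$ with $C'\cap C_j\neq\emptyset$ one has $C'\cap C_j=C'$, so it suffices to prove that $G(C')$ is $k(Q)$-edge-connected (which forces $j$-edge-connectedness for all smaller $j$). In the nested loop at line~\ref{loop3}, that last claim is exactly the conclusion of Corollary~\ref{conn3}: the selection clauses $\rho_{C\cup Q/C}\le 2$ and $\Psi_{C\cup Q/C}(k,G)\ge 0$ rewrite as the non-trivial hypotheses of the corollary (the definition of $\Psi$ encoding the three permissible alternatives), and the remaining hypothesis that $G(C)$ be $\delta(G')$-edge-connected follows from the inductive assumption applied to the highest shell present in $C$.

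The main obstacle is the base case of the first cluster added in the first \textbf{while} loop, where $C$ jumps from $\emptyset$ to a single cluster $Q$ and Corollary~\ref{conn3} does not apply. Here I would invoke Plesn\'{\i}k's theorem as quoted in the remark after Corollary~\ref{conn3}: since the selected $Q$ satisfies $\rho_Q\le 2$, Plesn\'{\i}k gives $k'(G(Q))=\delta(G(Q))$, reducing the task to verifying $\delta(G(Q))\ge k(Q)$. This is immediate when $k(Q)=k_{\max}$ because $Q$ is then a connected component of $G(C_{k_{\max}})=G(S_{k_{\max}})$ and every vertex of $Q$ already has at least $k_{\max}$ neighbours inside $Q$; at lower shells the same bound requires a more delicate accounting of the shell-index structure of $Q$, and this is where I expect the write-up to need most care.
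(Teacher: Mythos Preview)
Your termination argument and your inductive scheme for correctness coincide with the paper's: both descend through the shell indices, using finiteness of $\mathfrak Q$ for the inner loop and Corollary~\ref{conn3} for each enlargement $C\mapsto C\cup Q$. In that respect the proposal is faithful to the paper's own proof.

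Where you go beyond the paper is in isolating the base case, and your caution there is well founded---indeed it is not merely a matter of careful write-up but a genuine gap that neither your proposal nor the paper closes. The paper simply asserts that after the first loop ``$C$ has a cluster verifying the hypothesis of Corollary~\ref{conn3}'' and then runs the induction, never checking that this lone cluster is itself $k$-edge-connected. Your appeal to Plesn\'{\i}k settles the case $k(Q)=k_{\max}$, since then $Q$ is a connected component of the top core and $\delta(G(Q))\ge k_{\max}$. But if every cluster of $S_{k_{\max}}$ has diameter exceeding~$2$ and the first admissible $Q$ lies in some $S_{k_0}$ with $k_0<k_{\max}$, vertices of $Q$ may have neighbours in $C_{k_0+1}$, so $\delta(G(Q))<k_0$ is possible and Plesn\'{\i}k no longer delivers $k_0$-edge-connectivity. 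A concrete instance: let $G$ be the $3$-cube (which is $3$-regular of diameter~$3$) together with two extra vertices $u,v$ joined to each other and each to one cube vertex. Then $k_{\max}=3$, the unique cluster of $S_3$ is the cube, the unique cluster of $S_2$ is $\{u,v\}$ with $\rho_{\{u,v\}}=1$, and Algorithm~\ref{ss_algorithm} returns $C=\{u,v\}$; but $C\cap C_2=\{u,v\}$ is only $1$-edge-connected, so $C$ is not core-connected in strict sense. The algorithm as written needs an additional safeguard in the first loop (for instance requiring $\delta(G(Q))\ge k$ as well) for the stated theorem to hold.

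One minor wording point on the inductive step: to obtain the hypothesis ``$G(C)$ is $\delta(G')$-edge-connected'' you should appeal to the \emph{lowest} shell index $k_C$ present in $C$ (so that $G(C)=G(C\cap C_{k_C})$ is $k_C$-edge-connected), together with the observation that $\delta(G')\le k$ whenever $\emptyset\ne Q\subset S_k$ (otherwise $\psi(C')\ge k+1$ would force $Q\subset C_{k+1}$), giving $\delta(G')\le k\le k_C$.
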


\begin{algorithm}[H]
  \caption{wide sense core-connected}\label{ws_algorithm}
  \SetLine
  \KwIn{$\mathfrak{Q}$, the family of all clusters of a graph $G$}
  \KwOut{$C\subset V$, core-connected in wide sense}
  \KwInit{$C\leftarrow\emptyset$ , $D\leftarrow \emptyset$ , $\mathfrak{Q'\leftarrow \emptyset}$, $k\leftarrow k_{max}$}
  \Begin{
     \While {$C = \emptyset$ {\bf and} $\mathfrak{Q} \neq \emptyset$  {\bf and} $k \geq 2$}{
       $k\leftarrow k(\mathfrak{Q})$ \;\label{kreset}
       \If {there is some $Q\in\mathfrak{Q}$ satisfying: 
	 $\left\{
         \begin{array}{l}
         k(Q)=k \\
         \rho_{Q}\leq 2 \\
         \end{array}
	 \right]$
	 }{
         $C\leftarrow C\cup Q$ \;
         $\mathfrak{Q}\leftarrow \mathfrak{Q}\setminus \{Q\}$ \;
         }
         $\mathfrak{Q'}\leftarrow \mathfrak{Q'}\cup\{Q\in\mathfrak{Q}:k(Q)=k\}$ \;
         $\mathfrak{Q}\leftarrow \mathfrak{Q}\setminus \{Q\in\mathfrak{Q}:k(Q)=k\}$ \;\label{Qdeletion}
       }\label{ws.endloop1}
     \While{$\mathfrak{Q} \neq \emptyset$ {\bf and} $k \geq 2$}{
         $k\leftarrow k(\mathfrak{Q})$ \;\label{kreset2}
         \While {there is some $Q'\in\mathfrak{Q'}$ satisfying: 
	     $\left\{
             \begin{array}{l}
             k(Q')\geq k \\
             \rho_{(C\cup D\cup Q')/(C\cup D)}\leq 2 \\
             \Psi_{(C\cup D\cup Q')/(C\cup D)}(k,G)\geq 0 \\
	     \end{array}
	     \right]$
	     }{
             $D\leftarrow D\cup Q'$ \;
	     $\mathfrak{Q'}\leftarrow \mathfrak{Q'}\setminus \{Q'\}$ \;
         }
         \While {there is some $Q\in\mathfrak{Q}$ satisfying: 
	     $\left\{
             \begin{array}{l}
             k(Q)=k \\
             \rho_{(C\cup D\cup Q)/(C\cup D)}\leq 2 \\
             \Psi_{(C\cup D\cup Q)/(C\cup D)}(k,G)\geq 0 \\
	     \end{array}
	     \right]$\label{ws.qselec}
	     }{
             $C\leftarrow C\cup Q$ \;
	     $\mathfrak{Q}\leftarrow \mathfrak{Q}\setminus \{Q\}$ \;
         }
         $\mathfrak{Q'}\leftarrow \mathfrak{Q'}\cup\{Q\in\mathfrak{Q}:k(Q)=k\}$ \;
         $\mathfrak{Q}\leftarrow \mathfrak{Q}\setminus \{Q\in\mathfrak{Q}:k(Q)=k\}$\;\label{Qdeletion2}
     }\label{ws.endloop2}
  }
\end{algorithm}
(see Figure~\ref{alg.il} for illustration)
\begin{Teorema}
Algorithm~\ref{ws_algorithm} always stops,
and when it stops the set $C$ is core connected in wide sense.
\end{Teorema}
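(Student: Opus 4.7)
The plan is to mirror the structure of the previous theorem: a monotonicity argument on $k$ for termination, followed by a loop invariant preserved by Corollary~\ref{conn3}, but now applied with $C\cup D$ (rather than $C$ alone) playing the role of the corollary's base set. Termination is immediate: each outer \textbf{while} iteration sets $k\leftarrow k(\mathfrak{Q})$ (lines~\ref{kreset} and~\ref{kreset2}) and then strips every remaining cluster of shell index $k$ from $\mathfrak{Q}$ (lines~\ref{Qdeletion} and~\ref{Qdeletion2}), so $k$ strictly decreases between successive outer iterations; the nested \textbf{while} loops halt because $\mathfrak{Q}$ and $\mathfrak{Q'}$ are finite and each iteration removes an element from one of them.

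For correctness I would carry the invariant that, at the top of each outer pass at level $k$: (i) for every $j>k$ the set $C\cap C_j$ is already $j$-connected in wide sense in $G(C_j)$, and (ii) $G(C\cup D)$ is $k$-edge-connected, so that the hypothesis of Corollary~\ref{conn3}, with its ``$C$'' instantiated as our $C\cup D$, is in force. The base case is delivered by the first outer loop, which seeds $C=Q$ for a cluster with $\rho_Q\leq 2$ and $\delta(G(Q))\geq k(Q)$, making $G(Q)$ itself $k$-edge-connected by the Plesn\'ik-type remark following Corollary~\ref{conn3}. In the inductive step at level $k$, the first nested loop augments $D$ by clusters $Q'$ of shell index $\geq k$ satisfying $\rho_{(C\cup D\cup Q')/(C\cup D)}\leq 2$ and $\Psi_{(C\cup D\cup Q')/(C\cup D)}(k,G)\geq 0$; applying Corollary~\ref{conn3} shows that the updated $G(C\cup D)$ remains $k$-edge-connected. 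The second nested loop grows $C$ by clusters of shell index exactly $k$ under identical conditions, yielding at each step a $k$-edge-connected graph $G(C\cup D\cup Q)$.

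The wide-sense translation is the key move: any cut $[X,\bar X]$ of $G(C_k)$ whose two sides both meet the updated $C\cap C_k$ has $X\cap(C\cup D\cup Q)$ and $\bar X\cap(C\cup D\cup Q)$ both non-empty, because $C\cap C_k\subseteq C\cup D\cup Q$ after the update; hence $[X,\bar X]$ restricts to a non-trivial cut of the $k$-edge-connected graph $G(C\cup D\cup Q)$, giving $|[X,\bar X]|\geq k$. For $j>k$, $C\cap C_j$ is unchanged by the current iteration since $k(Q)=k$ forces $Q\subseteq S_k$ and hence $Q\cap C_{k+1}=\emptyset$, so invariant~(i) propagates. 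The main obstacle I anticipate is the bookkeeping around $D$ and $\mathfrak{Q'}$ across level changes: one must check that clusters deferred into $\mathfrak{Q'}$ at one level remain eligible at lower levels (their shell indices stay $\geq$ the current $k$), and that the minimum-degree hypothesis $k\leq\delta(G(C\cup D\cup Q))$ needed to invoke Corollary~\ref{conn3} holds at each selection, which should follow from $C\cup D\cup Q\subseteq C_k$ together with the $\rho\leq 2$ and $\Psi\geq 0$ conditions forcing enough incident edges between $Q$ and $C\cup D$.
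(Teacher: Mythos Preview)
Your proposal follows essentially the same route as the paper: a strict-decrease argument on $k$ for termination of the outer loops, finiteness for the inner loops, and an inductive application of Corollary~\ref{conn3} at step~\ref{ws.qselec} with the corollary's base set instantiated as the algorithm's $C\cup D$. Your treatment is in fact more explicit than the paper's on two points: you carry the auxiliary invariant that $G(C\cup D)$ is $k$-edge-connected (which is what justifies hypothesis~(\ref{320}) of Corollary~\ref{conn3} at each selection), and you spell out the cut-restriction argument that converts strict $k$-edge-connectivity of $G(C\cup D\cup Q)$ into wide-sense $k$-connectivity of $C\cap C_k$ inside $G(C_k)$; the paper simply asserts ``$k$-connected in wide sense by Corollary~\ref{conn3}'' without this translation step.
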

\begin{proof} \hfill
\begin{itemize}
\item For the first {\bf while} loop, step~\ref{kreset} 
computes the maximum
$k$ for the actual family of clusters, while step \ref{Qdeletion} 
deletes from the actual $\mathfrak{Q}$ 
all the clusters with shell index $k$.  As a consequence, $k$ is strictly
decreasing, and when the algorithm arrives to step~\ref{ws.endloop1}, either
$\mathfrak{Q}$ is empty or $C$ has a cluster verifying the hypothesis of
Corollary~\ref{conn3}.
\item The second {\bf while} loop will also finish because steps~\ref{kreset2} 
and~\ref{Qdeletion2} assure that $k$ is strictly decreasing.
\item The nested {\bf while} loops will finish because the 
families  $\mathfrak{Q}'$ and $\mathfrak{Q}$ are finite. 

\item 
Assume that the actual $C$ is core connected
in wide sense when we arrive to step~\ref{ws.qselec}.
By construction, $C\cap C_j=C$ when $j\leq k$,
and $C\cap C_j$ are previous instances of $C$ when $j>k$
(in fact, this instances are obtained each time
that we arrive to step~\ref{ws.endloop2}).
The new $C$, let us call it $C'$ for a moment, 
has the same intersections with $C_j$
for $j>k$, 
and when $j\leq k$ the intersection is $C'=C\cup Q$,
that is $k$ connected in wide sense by
Corollary~\ref{conn3},
as the conditions for the selection of $Q$ in step~\ref{ws.qselec}
match the hypothesis of Corollary~\ref{conn3}. 
Thus, all the instances of $C$
during the algorithm are core connected in
wide sense, 
whence the final $C$ is core connected in wide sense.
\end{itemize}
\end{proof}

\begin{figure}[!ht]
  \subfigure[Strict sense connectivity]{
    \includegraphics[width=0.55\textwidth]{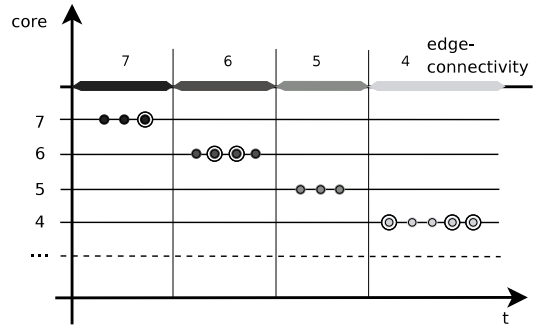}
  }
  \subfigure[Wide sense connectivity]{
    \includegraphics[width=0.55\textwidth]{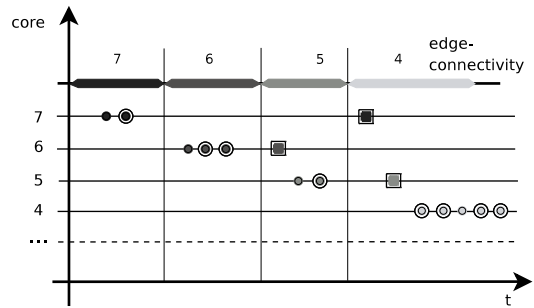}
  }
  \caption{\label{alg.il} 
    These two schemata show the progress in time ($t$)
of both algorithms.
Dots represent clusters, circled dots clusters in $C$, squared dots
clusters in $D$. Both algorithms start looking for a cluster of
diameter $2$, descending from the top shell. If they find someone,
this is the initial $C$. 
Then, the first algorithm looks, descending shell by shell, for clusters
to add to $C$, preserving core-connectivity in strict sense. 
The second algorithm does the same, but each time that it ends 
with the inspection of a shell, 
it looks between the omitted clusters from previous shells 
(saved in $\mathfrak Q'$), for clusters to add to $D$. 
These clusters provide more
possibilities of connection (in wide sense) to the next shell, thus producing
a bigger set $C$.   
    }
\end{figure}

Notice that in both algorithms we have limited $C$
to include nodes from the shells with $k \geq 2$.
The conditions from Corollary~\ref{conn3} are not
natural at level 1.
It is obvious that if $C$ is the strict sense
core-connected set constructed by Algorithm~\ref{ss_algorithm},
then after adjoining to $C$ all the clusters from
shell $1$ that are connected to it, the result
is also strict sense core-connected.
Analogously, it is clear that if $C$ and $D$ are the 
sets constructed by Algorithm~\ref{ws_algorithm},
if we adjoin to $C$ all the clusters from
shell $1$ connected to $C$ or $D$, then the new $C$
is also core-connected in wide sense.
We assume in the following Section that $C$ has been 
extended according to these remarks.

\section{Applications}\label{sec:appl}

In this section we will test our algorithms in some Internet maps obtained from different sources. Each of them has its own biases and explores the Internet at a particular level:
\begin{enumerate}
\item The Route Views Project \cite{oregon}, for instance, uses a short amount of BGP routers to peer with routers in other ASes and thus get routing tables. As a bias, this method does not detect hidden routes (not all inter-AS routes are public due to policies and agreements).

\item The CAIDA Association \cite{IR_CAIDA} developed {\em skitter} probes (which now evolved into the {\em Ark} infrastructure). Based on traceroute, this measurement nodes send ICMP requests to routers in order to discover paths. {\em Ark} has 38 monitors at November 2009. As an advantage, these tool finds the real routes followed by packets, advertised or not.
\item The DIMES Project \cite{DIMES} is a distributed system composed of around one thousand voluntary nodes --anyone may subscribe even with a low CPU power or bandwidth--. It explores the Internet with tools such as traceroute and ping to discover the topology.
\item Mercator \cite{govindan00heuristics} uses hop-limited ICMP probes to discover the Internet map with an {\em informed random} heuristic. The algorithm chooses some nodes in the path with the IP source-routing option, which is no longer available.
\end{enumerate}

At the Autonomous Systems interconnection level (called AS) data has been obtained from the Oregon Route Views Project, the CAIDA Association and the DIMES Project. Router level maps (IR) come from CAIDA, DIMES and Mercator.

To test our algorithms~\ref{ss_algorithm}
and~\ref{ws_algorithm} we implemented them into the LaNet-vi open source
software \cite{lanet-vi2}.  Based on the $k$-core decomposition, this tool
computes and visualizes the strict and wide core-connected components
giving also the lower bound of connectivity for every pair of nodes through a
logging file. This bound is obtained applying Corollary~\ref{conn3}
for nodes in the core-connected subgraph, and using the value of $\Phi$ for the others.
In the LaNet-vi color visualization (B\&W differences will be mentioned
in brackets), the nodes color (nodes shade) determines their core, while
the border color (border shade) suggests a lower bound for connectivity with other nodes. In fact, the absence of border points out that the node belongs to the core-connected subgraph and so its connectivity with nodes in inner shells from the core-connected subgraph is at least the node shell index. A colored border (shaded border), instead, implies that the node is not
in $C$ but it belongs to the $D$ set in
algorithm~\ref{ws_algorithm}, meaning that we assure certain level of
edge-connectivity with internal clusters in $C$ though this bound is lesser
than the node shell index. Finally for white nodes (squared nodes) we can
assure no edge-connectivity with others, and maybe the node is poorly connected.

\begin{figure}[!ht]
\begin{center}
\includegraphics[width=0.9\textwidth]{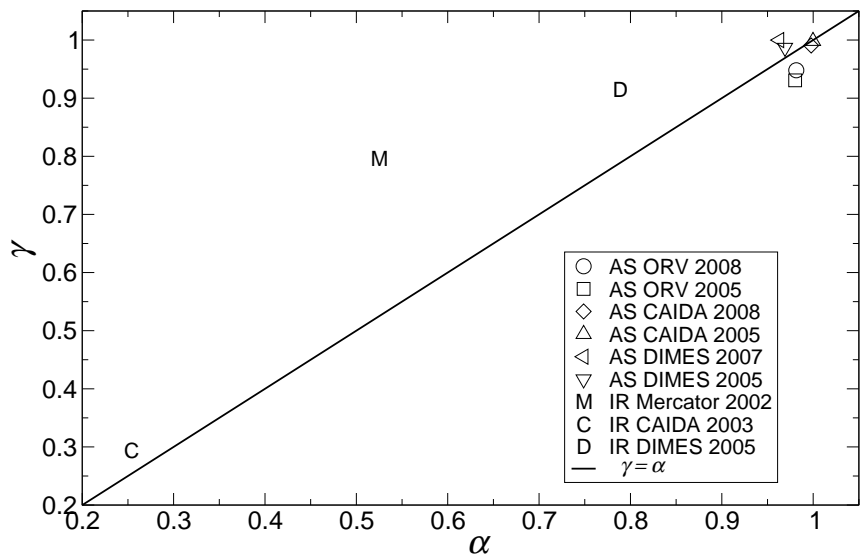}
\caption{Comparison between $\gamma$ and
$\alpha$. The line $\alpha=\gamma$ splits the space in two areas, the upper
one for graphs with $C$ concentrated on high populated shells ($\gamma>\alpha$), and the bottom for $C$ mainly in low populated
shells ($\gamma<\alpha$).
}\label{P_alpha}
\end{center}
\end{figure}
We performed different tests to show the effectiveness of our tool.
Defining the fraction of nodes in $C$ (the core-connected subgraph) for each shell $k$ as:
\begin{equation*}
\rho_k = \frac{|S_k \cap C|}{|S_k|} \enspace, 
\end{equation*}
we studied the following quantities:
\begin{eqnarray*}
 \alpha & = & \frac{1}{k_{\max}}\sum_{k=1}^{k_{\max}} \rho_k  \\
 \beta & = & \frac{2}{k_{\max} (k_{\max}+1)} \sum_{k=1}^{k_{\max}} \rho_k\; k \\
 \gamma & = & \sum_{k=1}^{k_{\max}} \rho_k \; \frac{|S_k|}{|C|} \enspace.\\
\end{eqnarray*}
In fact, $\alpha$, $\beta$ and $\gamma$ may be interpreted as 
probabilities for a node to be in $C$ under different models:
$\alpha$ is the average of all $\rho_k$; $\beta$ is a weighted average
of $\rho_k$, where each shell is weighted according to its number; and
$\gamma$ is a weighted average of the fractions $\rho_k$ with each
$k$-shell weighted by its size.

Figure~\ref{P_alpha} shows $\gamma$
as a function of $\alpha$. Networks with $\gamma>\alpha$ have most part of $C$ in high populated shells. We also present $\beta$ as a function of $\alpha$ in figure~\ref{beta_alpha}, where
$\beta>\alpha$ means that a lower part of $C$ is found in lower
shell indexes. 

On the one hand, AS maps are close to point $(1,1)$ in both figures,
and this means that most of their nodes belong to $C$. AS DIMES maps have $\beta<\alpha$
because some of their higher shells are empty.
But on the other hand, IR maps are worse core-connected than AS because the ratio of nodes in $C$ is low. Each IR has a different behavior; the most preferable
is IR DIMES ($90\%$ of nodes in $C$ and in higher shells).
The reason why the other IR maps give a small $C$ is the presence of big clusters in which the $\rho\leq 2$ condition is not satisfied. This is probably a bias in the exploration: IR DIMES maps
may be more accurate because they use a higher number of sources (thousands, according to \cite{DAHBVV2006}). We think that a detailed
picture of the Internet at the IR level will verify our hypotheses.
\begin{figure}[!ht]
\begin{center}
\includegraphics[width=0.9\textwidth]{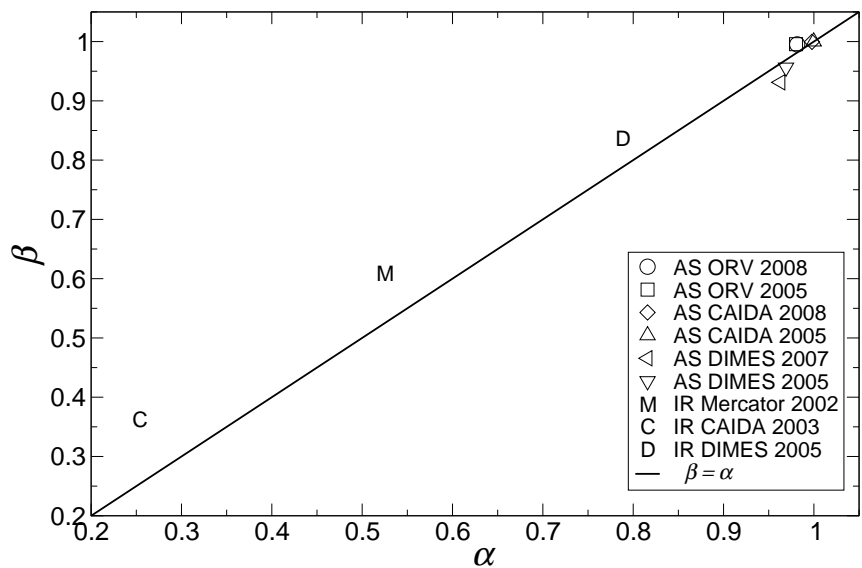}
\caption{$\beta$ vs. $\alpha$ for different maps.
}\label{beta_alpha}
\end{center}
\end{figure}

\begin{figure}[!ht]
\begin{center}
\includegraphics[width=0.9\textwidth]{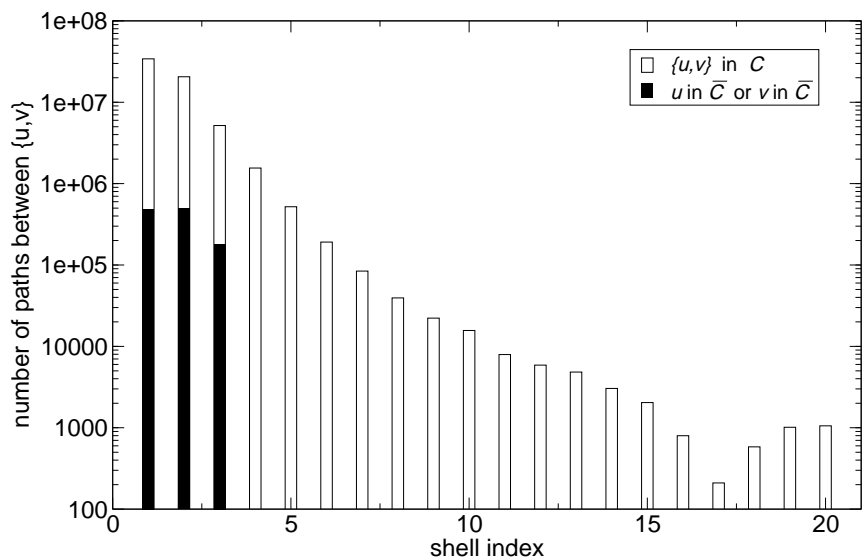}
\caption{Histogram showing the number of paths from $u$ to $v$ in the AS CAIDA 2008 map. white is for $\{u,v\}\in C$ and black is for $u\in \bar{C}$ or $v\in
\bar{C}$.}\label{histogram}
\end{center}
\end{figure}
A different analysis was done at shell level. At first we counted
pairs of nodes $\{u,v\}$ belonging to the wide-sense
core-connected graph, i.e., having both ends in $C$. Figure~\ref{histogram}
presents this information as a function of $min(sh(u),sh(v))$ for the AS
CAIDA 2008 map. It follows that only the first three shells have nodes
out of $C$, and that they are few (about $1\%$ per shell). This behavior is
similar in all maps but IR CAIDA 2003, where some nodes in low, medium
and high shells do not belong to the core-connected subgraph.

\begin{figure}[!ht]
\begin{center}
\includegraphics[width=0.9\textwidth]{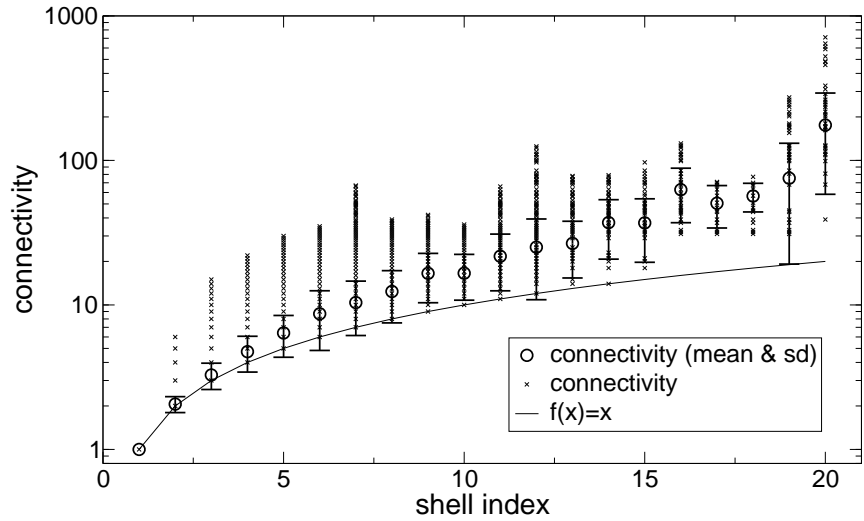}
\caption{Connectivity for $\{u,v\}$ pairs vs. their minimum
shell index. Circles stand for mean values per shell index and error bars show the standard
deviation; crosses show connectivity;
the bounding line {\em connectivity=shell index} is also displayed.}\label{gomory-hu-cmp}
\end{center}
\end{figure}
\begin{figure}[!ht]
\begin{center}
\includegraphics[width=0.9\textwidth]{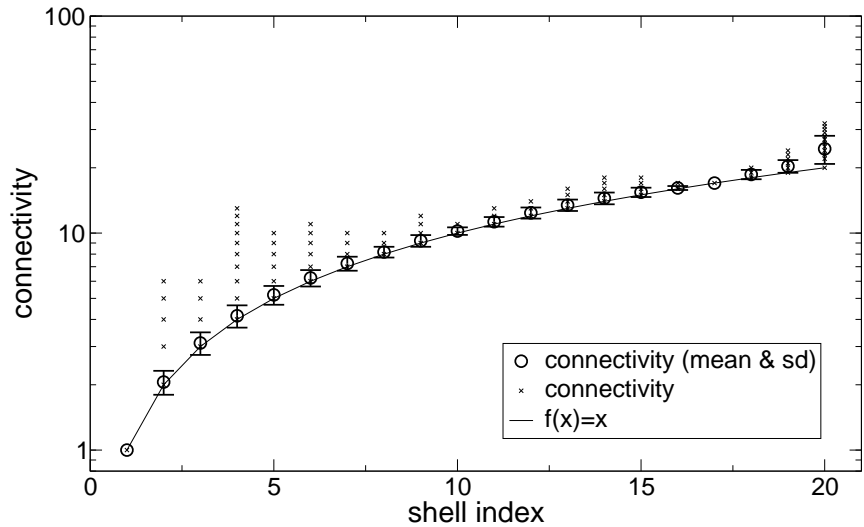}
\caption{Connectivity through the cores for $\{u,v\}$ pairs vs. their minimum shell index. Circles stand for mean values and error bars
show the standard deviation; crosses show connectivity;
the bounding line {\em connectivity=shell index} is also
displayed.}\label{gomory-hu-cmp-cores}
\end{center}
\end{figure}
Secondly, we compared the minimum shell index of a pair of nodes
$\{u,v\}$ in a core-connected subgraph with the connectivity, this last
obtained with the Gomory-Hu algorithm (see chapter 4 in \cite{ffbook}).
In figure~\ref{gomory-hu-cmp} we display the connectivity of each pair
$\{u,v\}$ as a function of the minimum shell index $min(sh(u),sh(v))$,
for the AS CAIDA 2008 map. We see that connectivity is
higher than the bound obtained by the core-connected graph, still the
average values are relatively close for low and medium shells (less than $100\%$ up to shell 13). As our bound is related to connectivity through cores (i.e., taking only the maximum $k$-core
containing $\{u,v\}$ to find paths) we show also the connectivity
through the $k$-core, where $k=min(sh(u),sh(v))$ in
Figure~\ref{gomory-hu-cmp-cores}. Clearly, connectivity is larger than connectivity through
the $k$-core, but this last is closer
to the minimum shell index. The other AS maps have analogous behavior.

As a remark, we couldn't compute connectivity for IR maps because it is
expensive on RAM memory for their size, which is greather than 100,000 nodes.

\begin{figure}[!ht]
\begin{center}
\includegraphics[width=\textwidth]{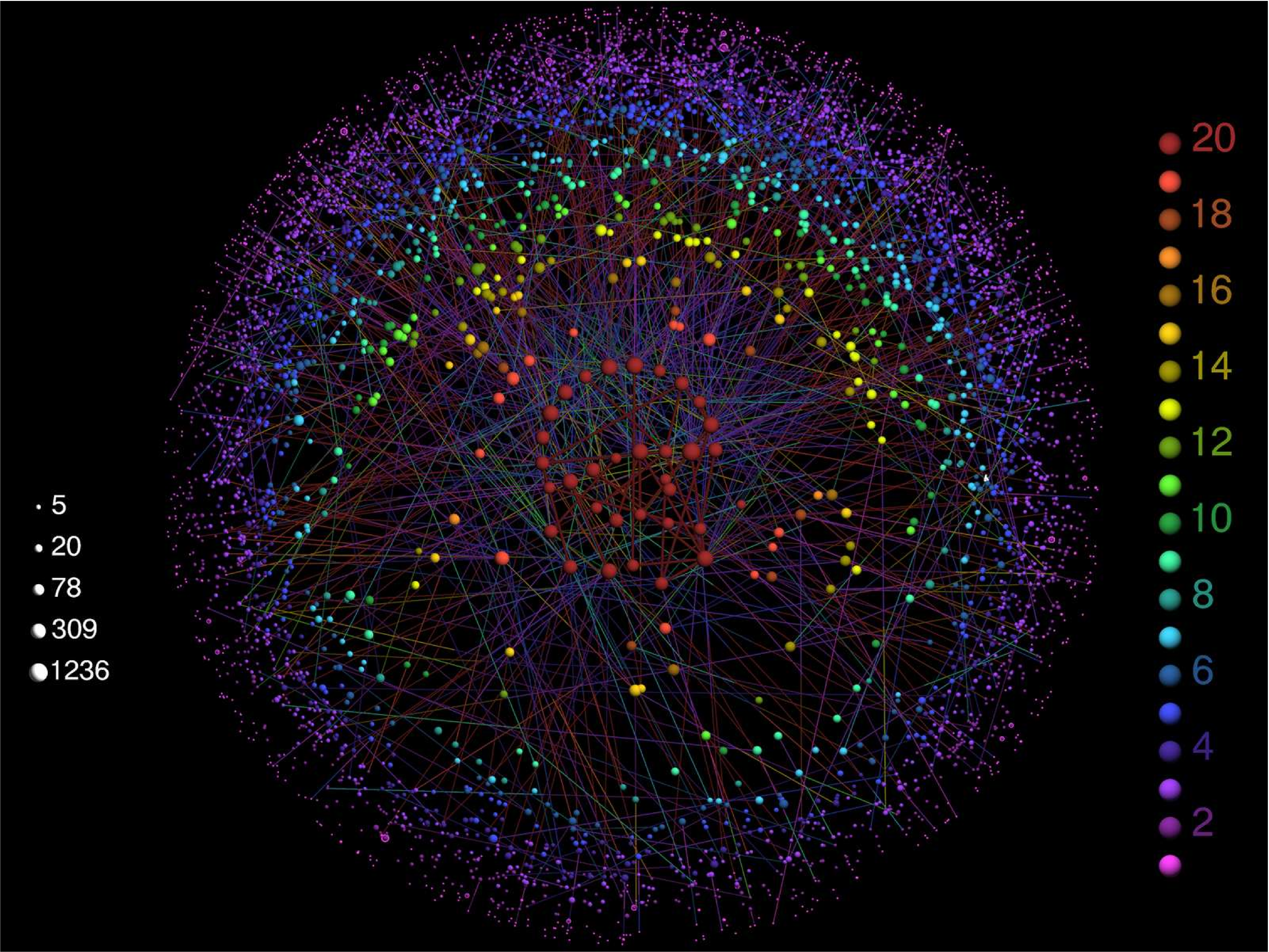}
\includegraphics[width=\textwidth]{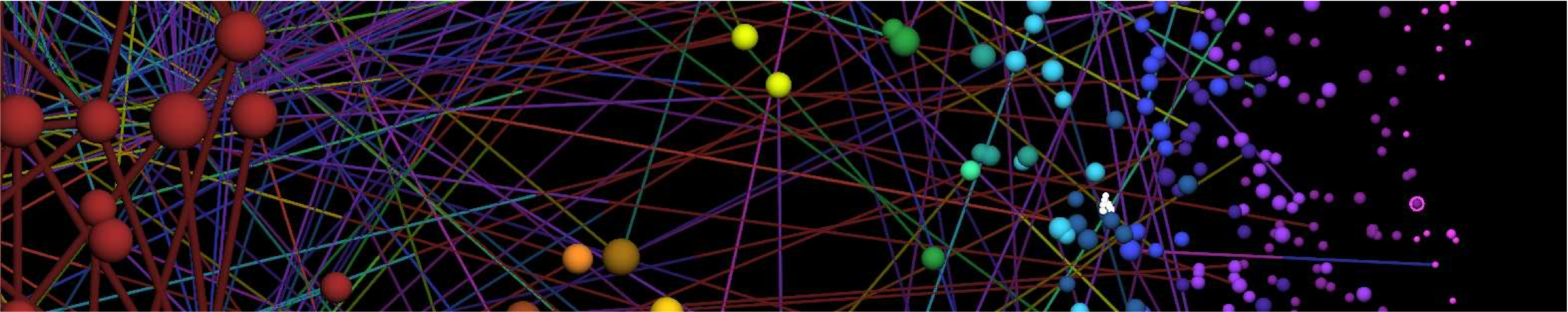}
\caption{Color visualization of AS CAIDA 2008 map displaying the wide-sense
core-connected subgraph (algorithm~\ref{ws_algorithm}). Bottom: a detail
showing some nodes out of the core-connected subgraph, and some in
white (out of $C$ and $D$); we can see one node belonging to $D$ on the
right.}\label{visAC2008}
\end{center}
\end{figure}
\begin{figure}[!ht]
\begin{center}
\includegraphics[width=\textwidth]{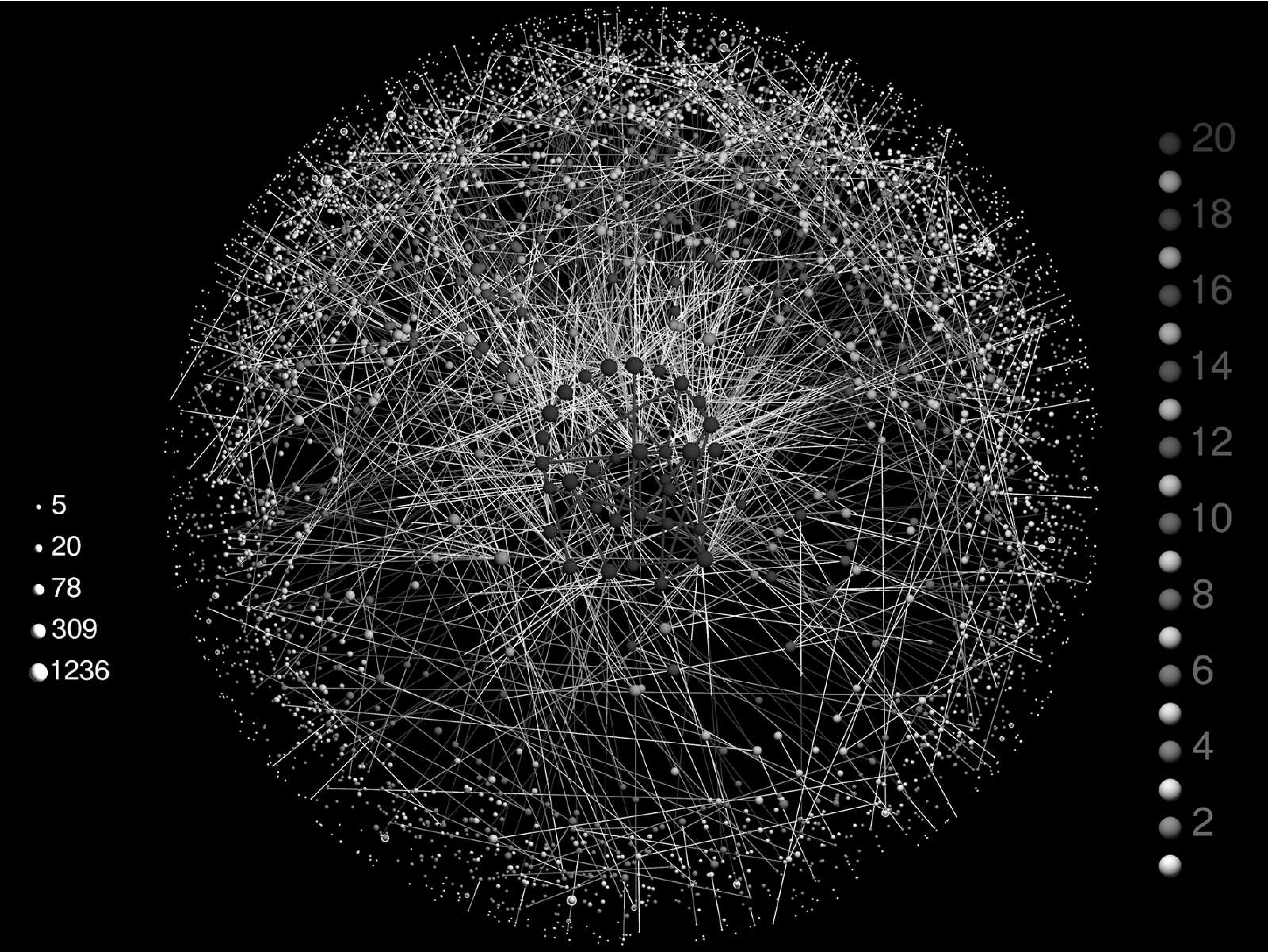}
\includegraphics[width=\textwidth]{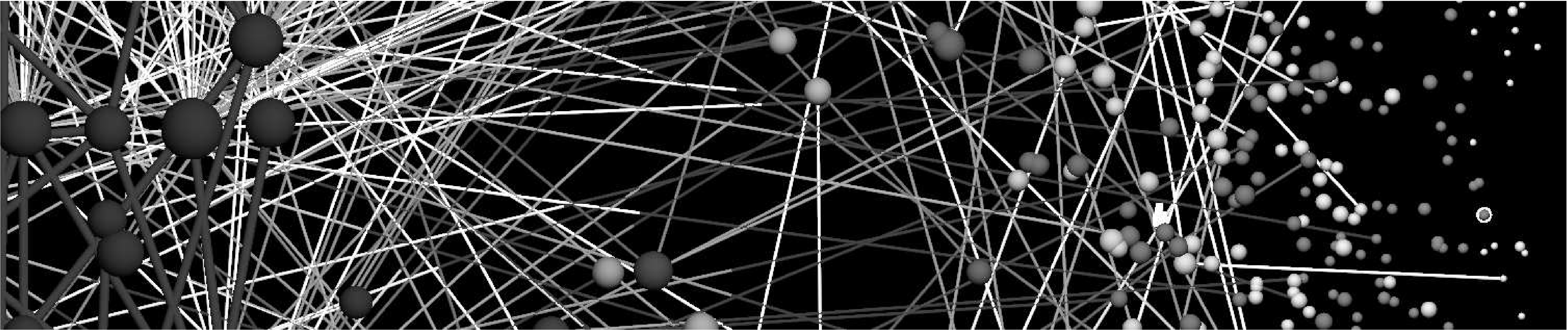}
\caption{Greyscale visualization of AS CAIDA 2008 map displaying the
wide-sense core-connected subgraph (algorithm~\ref{ws_algorithm}).
Bottom: a detail
showing some nodes out of the core-connected subgraph, and some in
white (out of $C$ and $D$); we can see one node belonging to $D$ on the
right.}\label{visAC2008BN}
\end{center}
\end{figure}
To conclude our analysis, we present in figure~\ref{visAC2008} a
color visualization for AS CAIDA 2008 map (greyscale is shown in
figure~\ref{visAC2008BN}), testing the algorithm~\ref{ws_algorithm}. 
%

\section{Conclusions}

In this work we have defined core-connected graphs, for which we obtain a lower bound of connectivity between nodes: the minimum shell index of them in the graph $k$-core decomposition. 
To formalize this relation we provided a theorem (Corollary 2) that gives
sufficient conditions to assure the forementioned lower bound. This
theorem is an extension of Plesnik's theorem
(see \cite{plesnik:cgoagd}):

Plesnik's theorem asserts that if a simple graph has diameter $2$, then
the connectivity is at least $\delta(G)$. It is not difficult to show
that the hypotheses of $G$ being simple can be easily relaxed, thus
obtaining that the connectivity is at least $\min_{v\in V} |N(v)|$
(which is $\delta(G)$ in the simple case).  The combinatorial nature of
connectivity, where multiple bifurcations give rise to a great
multiplicity of paths between two vertices, in the absence of bottle
necks, justifies the presumption that the diameter $2$ condition is
rather artificial.  In fact, \cite{TRmedusa06,medusa07} have noticed that the node to node
connectivity $k'(u,v)$ is at least $\min(sh(u),sh(v))$ for all but very
exceptional pairs $u,v$ in many real life net graphs (indeed, the
connectivity in this paper is less than $k'$, because the authors count
only disjoint paths, where disjoint means that they do not share neither
edges nor vertices).  Our results herein show some semilocal conditions
under which this bound for the connectivity holds, where semilocal means
here that the conditions involve, for each $k$, the relations between a
$k$-core and his next $(k-1)$-shell.
Rather that an alternative procedure to find the connectivity, which can
be rather efficiently found with the Gomory-Hu algorithm (see chapter 4 of
\cite{ffbook}), we hope that our results give some new
insight on the local-global relations for connectivity, useful in real
life net graphs.

We also developed two algorithms to get core-connected subgraphs of a given graph $G$: one for strict-sense connectivity (whose complexity is $O(e)$) and 
one for wide-sense connectivity ($O(e \times \sqrt{e})$).
In the strict-sense algorithm each cluster is considered only once to
determine if it fulfills conditions about its diameter and $\Phi$. Being
$e(Q)$ the amount of edges of cluster $Q$ counting the internal ones and
the ones that connect it with inner cores (clusters don't share edges
with other clusters in the same core), then in $O(e(Q))$ it can be
determined if the cluster's diameter is less or equal to $2$. The $\Phi$ condition can also be evaluated in $O(e(Q))$ as it implies a BFS in the cluster. Consequently, covering all clusters, the algorithm runs in $O(e)$.
In the wide-sense algorithm each cluster may be considered up to $k_{max}$ times (once per each shell). But $k_{max}$ is bounded above by $\sqrt{e}$ because to obtain a $k$-shell, $k+1$ nodes are needed at least with $k$ connections each, so the graph must have $k \times (k-1)$ edges. Then we get a complexity of $O(e \times \sqrt{e})$.

Finally we included these algorithms in the open-source software LaNet-vi to visualize core-connected subgraphs and list nodes in them, showing that it works for the Internet maps.

We are working on a possible relaxation on the conditions involved with $\Phi$ and the treatment of clusters with $\rho> 2$. We will also look for explanations on the high connectivity found in the higher shells.

\bibliographystyle{apalike}
\bibliography{core-connected}

\end{document}